\title[ ]{ The asymptotical behaviour of   embedded eigenvalues  for  perturbed periodic operators}
\author{Wencai Liu}
\address[Wencai Liu]{Department of Mathematics, University of California, Irvine, California 92697-3875, USA}\email{liuwencai1226@gmail.com}
\theoremstyle{plain}
\newtheorem{theorem}{Theorem}[section]
\newtheorem{corollary}[theorem]{Corollary}
\newtheorem{lemma}[theorem]{Lemma}
\newtheorem{proposition}[theorem]{Proposition}
\newcommand{\R}{\mathbb{R}}
\newcommand{\N}{\mathbb{N}}
\theoremstyle{definition}
\newtheorem{remark}[theorem]{Remark}
\begin{document}


\begin{abstract}
Let $H_0$ be a periodic operator on $\R^+$(or periodic Jacobi operator on $\N$). It is known that the absolutely continuous spectrum  of $H_0$ is consisted of spectral bands $\cup[\alpha_l,\beta_l]$.
Under the assumption  that $\limsup_{x\to \infty} x|V(x)|<\infty$ ($\limsup_{n\to \infty} n|V(n)|<\infty$), the
  asymptotical behaviour of  embedded  eigenvalues approaching to the spectral boundary  is established.

\end{abstract}
\maketitle
\section{Introduction and main results}
We   consider  the continuous Schr\"odinger  equation on $  \R$ ($\R^+$),
\begin{equation}\label{Gu}
    Hu=-u^{\prime\prime}+(V(x)+V_0(x))u=Eu,
\end{equation}
where $V_0(x)$ is  1-periodic and $V (x)$ is a decaying  perturbation.

In this paper, we always assume $V_0$ is 1-periodic and in $L^1[0,1]$. Without loss of generality, we  only consider the half-line equation.

When $V\equiv 0$, we have a $1$-periodic Schr\"odinger equation,
\begin{equation}\label{GV}
    H_0\varphi=-\varphi^{\prime\prime}+V_0(x)\varphi=E\varphi.
\end{equation}
It is known that the absolutely continuous  spectrum (essential spectrum) of $H_0$   is
consisted  of a union of   closed
intervals (often referred to as  spectral  bands). We denote by
\begin{equation*}
  \sigma_{\rm ac}(H_0) = \sigma_{\rm ess}(H_0)=\bigcup_{l}[\alpha_l,\beta_l].
\end{equation*}
The bands may collapse at some boundaries. Putting the collapsed  bands together, we denote by
\begin{equation*}
  \sigma_{\rm ac}(H_0) =  \sigma_{\rm ess}(H_0)=\bigcup_{l}[\tilde{\alpha}_l,\tilde{\beta}_l].
\end{equation*}
The last band may have  the form  as $[c,\infty)$.
In order to make the difference, we call $[\alpha_l,\beta_l]$ a {\it standard spectral band } and $[\tilde{\alpha}_l,\tilde{\beta}_l]$ a {\it non-standard spectral band}.

By Floquet theory, for any $E\in(\alpha_l,\beta_l)$, there exists  $\varphi$ of \eqref{GV}, which has the following form
\begin{equation}\label{Gfqp}
    \varphi(x,E)=p(x,E)e^{i k(E)x}
\end{equation}
 where $k(E)$ is the quasimomentum, and $p(x,E)$ is 1-periodic.
In each standard spectral  band $[\alpha_l,\beta_l]$, $k(E)$  is monotonically increasing from $0$ to $\pi$ or monotonically decreasing from $\pi$ to $0$.

Similarly,
we  consider the periodic Jacobi equation,
\begin{equation}\label{eq:JacobiUnperturbed}
  (H_0u)(n):=  a_{n+1} u({n+1})+a_n u({n-1})+b_{n+1}u(n)=Eu(n), n\geq 0,
\end{equation}
 where the $\{a_j, b_j\}$ are real sequences with  $q$-period, and
   $a_j$ is   positive for all possible $j$.

Likewise, we  consider a class of the  perturbed periodic Jacobi equation,
 \begin{equation}\label{eq:JacobiPerturbed}
  (Hu)(n)=  a_{n+1} u({n+1})+a_n u({n-1})+(b_{n+1}+V({n+1}))u(n)=Eu(n), n\geq 0,
 \end{equation}
 where  $V(n)$ is an real sequences. Now we will extend the notations for the continuous case to the discrete one.
 For   $E\in (\alpha_l,\beta_l)$, let $\varphi$  be the Floquet solution of \eqref{eq:JacobiUnperturbed}, namely,
\begin{equation}\label{Gflo}
 \varphi(n,E)= p(n,E) e^{i  \frac{k(E)}{q}n},
 \end{equation}
 where $p(n,E)$ is a $q$-periodic sequence  and $k(E)\in(0,\pi)$ is   the quasimomentum ($q$ is the period for $a_n$ and $b_n$).
   We   also have the standard bands and  non-standard bands notations,
 \begin{equation*}
   \sigma_{\rm ac}(H_0) = \sigma_{\rm ess}(H_0)=\cup_{l}[ {\alpha}_l, {\beta}_l]=\cup_{l}[\tilde{\alpha}_l,\tilde{\beta}_l].
 \end{equation*}
 It is a very interesting problem to study  phenomenons of   eigenvalues/singular continuous spectrum embedded into the absolutely continuous  (or essential) spectrum based on the decaying perturbations, for example
\cite{sim07,kissc,kru,lotoreichik2014spectral,Naboko2018,lukwn,luk14,lukd1,simonov2016zeroes,KRS,KLS,MR2945209,remlingsharp,killip02imrn,lastlukic}.
We care about  the asymptotical behavior of   eigenvalues approaching to the spectral boundaries.
The  asymptotical behavior of eigenvalues lying outside spectral bands has  been well studied
\cite{sa,dk04,dc07,damanik2003variational,killip2003sum}.

Our goal is to investigate   the asymptotical behavior of eigenvalues approaching  to the spectral boundaries from the inside  part.
In our previous note, we show that there is no embedded eigenvalues in the spectral band if $V(x)=\frac{o(1)}{1+x}$.
If we allow $V(x)=\frac{h(x)}{1+x}$ for  some positive function $h(x)$ going to $\infty$, $H_0+V$ can have the desired countably  embedded eigenvalues \cite{ldcombine}.
In  those two cases, it is trivial to study the asymptotical behavior  of embedded eigenvalues.
The purpose of this paper is to study the remaining case, namely $V(x)=\frac{O(1)}{1+x}$.
If $V_0=0$ (it is reduced to the free case), both the continuous  and discrete cases have  been studied in \cite{KLS} and \cite{liu2018criteria} respectively.
Starting from the Floquet solutions  for periodic operators, we obtained a universal and optimal  result for  the asymptotical behavior.
It leads to various applications. See Remark \ref{re} for the details. In our forthcoming papers, we will show that there is no singular continuous spectrum if $V(x)=\frac{O(1)}{1+x}$ \cite{sc1,sc2}.

In order to state our main results,   more notations are necessary.
 The Wronskian for the discrete case is given by
 \begin{equation*}
   W(f,g)(n)=f(n)g(n+1)-f(n+1)g(n).
 \end{equation*}
The  Wronskian for the continuous case is given by
 \begin{equation*}
   W(f,g)(x)=f(x)g^{\prime}(x)-f^{\prime}(x)g(x).
   \end{equation*}
 Since the Floquet solution $\varphi(x,E)$ ($\varphi(n,E)$) is a solution of $H_0u=Eu$, the Wronskian $W(\overline{\varphi},\varphi)$ is constant and there is an $\omega\in\R$ 
 such that
\begin{equation*}
 W(\overline{\varphi},\varphi)=i\omega.
\end{equation*}
 In the continuous case,  define
\begin{equation*}
  \Gamma(E)=\int_0^1\frac{4}{\omega^2}|\varphi(x,E)|^4dx.
\end{equation*}
 In the discrete case,  define
 \begin{equation*}
   \Gamma(E)=\frac{1}{q}\sum_{n=1}^{q}\frac{4}{\omega^2}|\varphi(n,E)|^4.
 \end{equation*}
 $\Gamma(E)$ is well defined for $E\in(\alpha_l,\beta_l)$ and blows up as $E\to \alpha_l$ or $E\to \beta_l$.
 In the following arguments, $H_0$ is always  a  continuous periodic operator or periodic Jacobi operator. $V(x)$ or $V(n)$ is the perturbed potential.


In the continuous case, define $P$ as
\begin{equation}\label{Gdef.Pc}
  P=\{E\in \R:-u^{\prime\prime}+(V(x)+V_0(x))u=Eu \text { has an }L^2(\R^+) \text { solution }\},
\end{equation}
and in the discrete case,
define $P$ as
\begin{equation}\label{Gdef.Pd}
  P=\{E\in \R:  \text{ equation } \eqref{eq:JacobiPerturbed}  \text { has an }\ell^2(\N) \text { solution }\}.
\end{equation}
$P$ is the collection of    eigenvalues  for operators $H_0+V$ with all the possible boundary conditions at $0$.

 \begin{theorem}\label{Thmunic}

Suppose $\limsup_{x\to \infty}x|V(x)|= A$ ($\limsup_{n\to \infty}n|V(n)|= A$).
  For any standard spectral band $[\alpha_l,\beta_l]$, let $\delta_l$ be the unique point in $[\alpha_l,\beta_l]$ such that the  quasimomentum $k(\delta_l)=\frac{\pi}{2}$. Then   the set $P\cap (\alpha_l,\beta_l)$ is a countable set with two possible accumulation points $\alpha_l$ and $\beta_l$. Moreover,
 \begin{equation}\label{Gapr12Gamma}
    \sum_{E_i\in P\cap (\alpha_l,\delta_l)} \frac{1}{\Gamma(E_i) }\leq \frac{A^2}{2},
 \end{equation}
 and
 \begin{equation*}
     \sum_{E_i\in P\cap (\delta_l,\beta_l)} \frac{1}{\Gamma(E_i) }\leq \frac{A^2}{2}.
 \end{equation*}

\end{theorem}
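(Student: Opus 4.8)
The plan is to diagonalize the perturbed equation in the Floquet basis, reduce the existence of an $L^2$ (resp. $\ell^2$) eigenfunction to the decay of a Prüfer-type amplitude, and then read off the sum rule from a Bessel-type inequality for the resonant Fourier coefficients of $V$. Throughout I fix $E$ in the interior of a standard band, so that $k(E)\in(0,\pi)$ and $\varphi(\cdot,E),\overline{\varphi}(\cdot,E)$ are two linearly independent bounded solutions of $H_0u=Eu$ with $W(\overline{\varphi},\varphi)=i\omega\neq0$, and $\abs{\varphi}$ is bounded above and below. Writing a solution $u$ of the perturbed equation by variation of parameters as $u=a\varphi+b\overline{\varphi}$ with $a=W(\overline{\varphi},u)/(i\omega)$ and $b=-W(\varphi,u)/(i\omega)$, and using that the perturbation enters only through $V$ (so that $\frac{d}{dx}W(\overline{\varphi},u)=\overline{\varphi}\,Vu$, with the obvious difference-equation analogue in the discrete case), one gets a first-order system for $(a,b)$ driven by $V$. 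Taking $u$ real forces $b=\overline{a}$, whence
\begin{equation*}
 a'=\frac{V}{i\omega}\bigl(a\abs{\varphi}^2+\overline{a}\,\overline{\varphi}^2\bigr).
\end{equation*}

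Passing to amplitude--phase variables $a=Re^{i\theta}$ yields
\begin{equation*}
 (\log R)'=-\frac{V}{\omega}\operatorname{Im}\bigl(e^{2i\theta}\varphi^2\bigr),
\end{equation*}
together with an equation for $\theta$. Since $\abs{\varphi}$ is bounded above and below, $u\in L^2$ (resp. $\ell^2$) if and only if $R\in L^2$, and for each $E$ there is at most a one-dimensional space of decaying solutions; thus $E\in P\cap(\alpha_l,\beta_l)$ precisely when the initial phase can be tuned so that $R(x)\to0$. To extract the resonance I would expand the $1$-periodic function $\varphi^2e^{-2ikx}=p^2$ in a Fourier series $p(x,E)^2=\sum_m\hat p_m(E)e^{2\pi imx}$, so that the right-hand side of the equation for $\log R$ becomes a superposition of oscillations at the frequencies $2k(E)+2\pi m$. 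Because $V=O(1/x)$, every non-resonant frequency contributes a convergent (hence harmless) integral, and the decay $R(x)\to0$ can only be produced by a genuine resonance between $V$ and the frequency $2k(E)$ modulo the lattice $2\pi\Z$.

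The size of the decay is then governed by the resonant coefficients of $V$ at the frequencies $2k(E)+2\pi m$, and the threshold computation (visible already in the Wigner--von Neumann case $V_0=0$, $V=A\sin(2kx)/x$) identifies the relevant combination through Cauchy--Schwarz and Parseval: the optimal resonant profile of $V$ matches $\overline{p^2}$, and the critical decay exponent $\tfrac12$ is reached exactly when the squared resonant amplitude equals $\frac{\omega^2}{\int_0^1\abs{p}^4\,dx}=\frac{4}{\Gamma(E)}$, since $\Gamma(E)=\frac{4}{\omega^2}\int_0^1\abs{p}^4\,dx$ collects precisely the weight $\sum_m\abs{\hat p_m}^2$. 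Running this over all eigenvalues, the resonant part of $xV$ is an almost periodic function $\sum_i\operatorname{Im}(e^{2ik(E_i)x}G_i)$ with $1$-periodic factors $G_i$; a Bessel/orthogonality estimate bounds the sum of the mean squares by the supremum square $\limsup_{x\to\infty}(x\abs{V(x)})^2=A^2$, and translating back through the threshold identification gives $\sum_i 1/\Gamma(E_i)\le A^2/2$ on each half-band. The split at $\delta_l$, where $k=\pi/2$, is forced by this orthogonality: for $k_i,k_j$ lying in a common half the combinations $2k_i\pm2k_j$ avoid the lattice $2\pi\Z$, so the resonant contributions are mutually mean-orthogonal and the budget $A^2$ can be apportioned cleanly, whereas across the two halves the mirror relation $k_j=\pi-k_i$ couples the channels and must be separated.

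Once the sum rule is in hand, countability and the location of the accumulation points are automatic. On any compact $K\subset(\alpha_l,\beta_l)$ the continuous function $\Gamma$ is bounded, so $1/\Gamma\ge c_K>0$ on $K$, and the inequalities force only finitely many $E_i$ in $K$; exhausting $(\alpha_l,\delta_l)$ and $(\delta_l,\beta_l)$ by compacts shows $P\cap(\alpha_l,\beta_l)$ is countable and can accumulate only at $\alpha_l$ and $\beta_l$, where $\Gamma\to\infty$. The discrete case runs identically with integrals over $[0,1]$ replaced by averages over a period. The main obstacle is the critical $1/x$ decay: the nonzero mean of $\abs{\varphi}^2$ produces a secular, logarithmically divergent drift in $\theta$, so before isolating the resonant frequency one must absorb this drift into a modified phase, and one must justify interchanging the limit that defines the resonant coefficient with the Bessel inequality in order to pin down both the exact constant $A^2/2$ and the clean decoupling of the two half-bands at $\delta_l$.
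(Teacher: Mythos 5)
Your outline follows the same strategy as the paper's proof (Pr\"ufer variables, resonance at frequency $2k(E)$, a Bessel-type almost-orthogonality bound, the half-band split to avoid $k_i+k_j=\pi$), and your threshold heuristic even yields the correct constant; but the two steps that constitute the actual proof are missing, and as written they would fail. First, your criterion for membership in $P$ --- ``the initial phase can be tuned so that $R(x)\to 0$'' --- is neither the right condition nor quantitative enough. What is needed is that $R(\cdot,E_i)\in L^2$ forces, along a \emph{single} sequence $B_j\to\infty$ common to all of $E_1,\dots,E_N$ (take it from $\sum_i R(\cdot,E_i)\in L^2$), the decay $R(B_j,E_i)\leq B_j^{-1/2}$; integrating $[\ln R]'=\frac{V}{2\gamma'}\sin 2\theta$ then gives the quantitative lower bound on the resonant coupling, $\int_0^{B_j}\frac{V(t)\sin 2\theta(t,E_i)}{\gamma'(t,E_i)}\,dt\leq -\log B_j+O(1)$, which is the only place the hypothesis $E_i\in P$ enters the estimate.

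Second, and this is the main gap: you apply the Bessel inequality to ``the resonant part of $xV$,'' an almost periodic function $\sum_i\operatorname{Im}(e^{2ik_ix}G_i)$. For a general $V$ with $\limsup_{x\to\infty} x|V(x)|=A$ no such decomposition exists --- $V$ has no almost-periodic structure, and ``the resonant coefficient of $V$'' is not a well-defined object whose defining limit you can interchange with anything. The rigorous implementation never decomposes $V$: one works in the weighted space $\mathcal{H}_j=L^2((0,B_j),(1+x)dx)$, where $\|V\|^2_{\mathcal{H}_j}\leq M^2\log(1+B_j)$ for any $M>A$, takes unit vectors $e_i^j$ proportional to $\frac{\sin 2\theta(x,E_i)}{\gamma'(x,E_i)(1+x)}$, computes the normalization $A_i^j=\tfrac12\Gamma(E_i)\log B_j+O(1)$ (this is where $\Gamma$ enters), shows $\langle e_i^j,e_g^j\rangle=O(1/\log B_j)$ for $i\neq g$, and applies the almost-orthogonality Bessel lemma of Kiselev--Last--Simon to the projections $\langle V,e_i^j\rangle\leq -\sqrt{2/\Gamma(E_i)}\sqrt{\log B_j}+O(1)$; dividing by $\log B_j$, letting $j\to\infty$ and then $M\downarrow A$ gives $\sum_i 1/\Gamma(E_i)\leq A^2/2$ on each half-band. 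Moreover, both the $O(1)$ in the normalization and the $O(1/\log B_j)$ cross terms rest on an oscillatory-integral lemma ($\int_0^x f(t)\frac{\cos 4\theta(t,E)}{1+t}dt=O(1)$ for $k(E)\neq\pi/2$, and the analogous bound for $\sin 2\theta(\cdot,E_1)\sin 2\theta(\cdot,E_2)$ when $k_1\neq k_2$, $k_1+k_2\neq\pi$), which is exactly the secular-drift issue you flag as ``the main obstacle'' but leave unresolved. In short, the proposal is a faithful heuristic for the paper's argument, but everything you defer --- the subsequence extraction, the log-weighted Hilbert space, and the oscillatory-integral control --- is the proof.
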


  For any $\lambda\in\cup_l\{\alpha_l,\beta_l\}$, we  define $\kappa_{\lambda}=1$ if $\lambda$ is non-collapsed. Otherwise, $\kappa_{\lambda}=2$.
  See the formal description  of ``non-collapsed" in Theorem \ref{ThmNonc} and the Remark after.

Based on Theorems \ref{Thmunic}, we can get a lot of Corollaries.
 \begin{corollary}\label{thm1}
 Suppose $\limsup _{x\to \infty}x|V(x)|=A$ ($\limsup _{n\to\infty}n|V(n)|=A$). Then for any standard spectral band $[\alpha_l,\beta_l]$,
 $P\cap[\alpha_l,\beta_l] $
  is a countable  set. Moreover, there is a constant $K$ (depends on $k$ uniformly in any bounded set ) such that
 \begin{equation*}
    \sum_{E_i\in P\cap (\alpha_l,\beta_l)}\min\{|E_i-\alpha_l|^{\kappa_{\alpha_l}},|E_i-\beta_l|^{\kappa_{\beta_l}}\}\leq KA^2.
 \end{equation*}
  In particular,
 \begin{equation*}
    \sum_{E_i\in P\cap (\alpha_l,\beta_l)}\min\{|E_i-\alpha_l|^{2},|E_i-\beta_l|^{2}\}\leq K A^2.
 \end{equation*}
 \end{corollary}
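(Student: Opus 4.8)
The plan is to derive Corollary~\ref{thm1} from the two half-band estimates of Theorem~\ref{Thmunic} by comparing the weight $\min\{|E-\alpha_l|^{\kappa_{\alpha_l}},|E-\beta_l|^{\kappa_{\beta_l}}\}$ with $\Gamma(E)^{-1}$ near each band edge. First I would note that $P\cap[\alpha_l,\beta_l]$ is countable: Theorem~\ref{Thmunic} gives countability of $P\cap(\alpha_l,\beta_l)$, and adjoining the two endpoints preserves this. Next I would split the sum at $\delta_l$. For $E\in(\alpha_l,\delta_l]$ one has trivially $\min\{|E-\alpha_l|^{\kappa_{\alpha_l}},|E-\beta_l|^{\kappa_{\beta_l}}\}\le|E-\alpha_l|^{\kappa_{\alpha_l}}$, and symmetrically $\min\le|E-\beta_l|^{\kappa_{\beta_l}}$ on $[\delta_l,\beta_l)$; the only role of $\delta_l$ is to match the two separate bounds supplied by the theorem (the at most one eigenvalue sitting exactly at $\delta_l$ contributes a single bounded term, absorbed into the final constant).

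The heart of the argument is the uniform comparison
\[
|E-\alpha_l|^{\kappa_{\alpha_l}}\,\Gamma(E)\le C_{\alpha_l}\ \ (E\in(\alpha_l,\delta_l]),\qquad
|E-\beta_l|^{\kappa_{\beta_l}}\,\Gamma(E)\le C_{\beta_l}\ \ (E\in[\delta_l,\beta_l)).
\]
Since $\Gamma$ is finite, positive and continuous on the open band, each left-hand side is continuous on a half-closed interval, so the estimate reduces to controlling the rate of blow-up of $\Gamma$ at the edge. To obtain this rate I would use the identity $\omega\,\frac{dk}{dE}=\int_0^1|\varphi(x,E)|^2\,dx$ (respectively $\frac1q\sum_{n=1}^q|\varphi(n,E)|^2$), expressing the conserved flux $\omega$ as the group velocity times the Floquet norm; it is obtained by differentiating $H_0\varphi=E\varphi$ in $E$ and comparing the Wronskian $W(\bar\varphi,\partial_E\varphi)$ at the two ends of one period. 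Substituting into the definition of $\Gamma$ yields the normalization-invariant formula $\Gamma(E)=R(E)\,(dk/dE)^2$, where $R(E)=4\int|\varphi|^4/(\int|\varphi|^2)^2$ stays bounded above and below near each edge by the continuous dependence of the Floquet solution on $E$ up to the band edge. The order of vanishing of $dE/dk$ is exactly encoded by $\kappa_\lambda$: the Floquet (Hill discriminant) expansion gives $dk/dE\asymp|E-\lambda|^{-1/2}$ at a simple non-collapsed edge ($\kappa_\lambda=1$), while $dk/dE\asymp1$ at a collapsed edge ($\kappa_\lambda=2$, where the monodromy equals $\pm I$ and the group velocity stays nonzero). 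In either case $|E-\lambda|^{\kappa_\lambda}(dk/dE)^2$ stays bounded as $E\to\lambda$, which gives the displayed comparison.

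With the comparison in hand, summing over the left half-band gives $\sum_{E_i\in P\cap(\alpha_l,\delta_l)}|E_i-\alpha_l|^{\kappa_{\alpha_l}}\le C_{\alpha_l}\sum 1/\Gamma(E_i)\le C_{\alpha_l}A^2/2$, and likewise on the right, so the first inequality holds with $K=\tfrac12\max\{C_{\alpha_l},C_{\beta_l}\}$; tracking the constants in the discriminant expansion and in $R(E)$ shows $K$ can be taken uniform over all bands meeting a fixed bounded set. The ``in particular'' statement then follows since $\kappa_\lambda\le2$ and the distances are bounded by the band length $L$, so $|E-\lambda|^2\le\max(1,L^{2-\kappa_\lambda})\,|E-\lambda|^{\kappa_\lambda}$, and the minimum is monotone in each argument. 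The main obstacle I anticipate is the edge analysis of the second paragraph: treating the collapsed case uniformly, where $\Gamma$ does not in fact blow up but the exponent $\kappa_\lambda=2$ makes the weight tend to zero fast enough, and verifying that $R(E)$ stays bounded (the Bloch solution does not degenerate) as the two bands merge.
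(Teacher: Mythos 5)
Your core comparison step is sound, and it actually takes a different route from the paper. The paper controls $\Gamma$ via the explicit Gronwall-type bound $\Gamma(E)\le K/\sin^2 k(E)$ (see \eqref{Gapr111}), then converts $\sin^2 k(E_i)$ into distances to the band edges through the discriminant identity $4-D(E)^2=4\sin^2 k(E)$ together with Eastham's dichotomy (Theorem \ref{ThmNonc}), which yields $2\mp D(E)\ge \frac{1}{K}|E-\lambda|^{\kappa_\lambda}$. You instead use the group-velocity identity $\omega\, dk/dE=\int_0^1|\varphi|^2\,dx$, rewrite $\Gamma(E)=R(E)\,(dk/dE)^2$ with $R$ scale invariant, and read the exponent $\kappa_\lambda$ off the rate of $dk/dE$ at the edge. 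This is equivalent in substance (both hinge on $D'(\lambda)$ or $D''(\lambda)$ being nonzero), but it organizes the edge analysis around $k'(E)$ rather than the discriminant inequality; it does handle the collapsed case correctly, provided you complete the step you flag at the end, namely that the \emph{normalized} Bloch solution does not degenerate as the bands merge (the scale invariance of $R$ is essential here, since with the paper's normalization $\varphi\to 0$ at a collapsed edge).

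The genuine gap is your treatment of $E=\delta_l$. The sum in the corollary runs over $P\cap(\alpha_l,\beta_l)$, which contains $\delta_l$, while Theorem \ref{Thmunic} says nothing about $\delta_l$ itself. You propose to absorb a possible eigenvalue at $\delta_l$ as ``a single bounded term, absorbed into the final constant,'' but this cannot work: the required bound is $KA^2$ with $K$ independent of $A$, so it must tend to $0$ as $A\to 0$, whereas $\min\{|\delta_l-\alpha_l|^{\kappa_{\alpha_l}},|\delta_l-\beta_l|^{\kappa_{\beta_l}}\}$ is a fixed positive number. Your argument therefore breaks down exactly in the regime where $A$ is small and $\delta_l\in P$. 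What is needed (and what the paper invokes) is Lemma \ref{Leapr12c} from \cite{ldcombine} (Lemma \ref{Leapr12d} in the discrete case): there exists $K_0$ such that $\limsup_{x\to\infty} x|V(x)|\le 1/K_0$ forces $\delta_l\notin P$. With that input, the case $A\le 1/K_0$ contributes no $\delta_l$ term at all, and in the case $A>1/K_0$ the single term is bounded by a constant, which is then $\le C K_0^2 A^2$ and can legitimately be absorbed. Without some such ``no eigenvalue at the band midpoint for small $A$'' statement, the corollary is not a consequence of Theorem \ref{Thmunic} alone, and your proof is incomplete at this point.
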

 \begin{corollary}\label{thm2}

 Suppose $\limsup_{ x\to\infty}x|V(x)|=A$ ($\limsup_{n\to\infty} n|V(n)|=A$). Then for any non-standard spectral band $[\tilde{\alpha}_l,\tilde{\beta}_l]$, there exists some $\epsilon$ (only depends on $H_0$ not on $l$)   and  a constant $K$ (depends on $l$ uniformly in  any bounded set) such that
 \begin{equation*}
    \sum_{E_i\in P\cap (\tilde{\alpha}_l,\tilde{\alpha}_l+\epsilon)} \min\{|E_i-\tilde{\alpha}_l|,|E_i-\tilde{\beta}_l|\}\leq K A^2,
 \end{equation*}
 and
 \begin{equation*}
    \sum_{E_i\in P\cap (\tilde{\beta}_l-\epsilon,\tilde{\beta}_l)} \min\{|E_i-\tilde{\alpha}_l|,|E_i-\tilde{\beta}_l|\}\leq K A^2.
 \end{equation*}
 \end{corollary}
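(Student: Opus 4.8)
The plan is to localise Theorem~\ref{Thmunic} to the two outer endpoints of the non-standard band, using the single structural fact that these endpoints are \emph{non-collapsed} edges while the interior gluing points are collapsed. Decompose $[\tilde{\alpha}_l,\tilde{\beta}_l]$ into the standard bands it is built from; then $\tilde{\alpha}_l=\alpha_{l'}$ is the lower edge of the first standard band $[\alpha_{l'},\beta_{l'}]$ and, by the very definition of a non-standard band, this edge is non-collapsed, i.e. $\kappa_{\tilde{\alpha}_l}=1$. Symmetrically $\tilde{\beta}_l=\beta_{l''}$ is a non-collapsed upper edge of the last standard band $[\alpha_{l''},\beta_{l''}]$. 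I describe the argument at $\tilde{\alpha}_l$; the estimate at $\tilde{\beta}_l$ is identical after reflecting the band.

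The essential input is the blow-up rate of $\Gamma$ at a non-collapsed edge. Because $\kappa_{\tilde{\alpha}_l}=1$, the Floquet analysis near the edge (the same one underlying Theorem~\ref{ThmNonc} and the proof of Corollary~\ref{thm1}) produces constants $c>0$ and $\epsilon_0>0$, depending only on $H_0$, with
\begin{equation*}
  \frac{1}{\Gamma(E)}\ \geq\ c\,(E-\tilde{\alpha}_l),\qquad E\in(\tilde{\alpha}_l,\tilde{\alpha}_l+\epsilon_0).
\end{equation*}
The reason for the exponent $1$ is that at a non-collapsed edge the discriminant crosses $\pm2$ transversally, so the quasimomentum obeys $k(E)\sim\sqrt{E-\tilde{\alpha}_l}$ and the Wronskian normalisation gives $\omega^2\sim c'(E-\tilde{\alpha}_l)$; since $\int_0^1|\varphi|^4\,dx$ (resp. the $q$-periodic sum) stays bounded and bounded away from $0$ as $E\to\tilde{\alpha}_l$, this forces $\Gamma(E)\sim C/(E-\tilde{\alpha}_l)$. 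At a collapsed edge the crossing is tangential and one gets $k(E)\sim E-\tilde{\alpha}_l$, hence the exponent $2$ instead; this is precisely the dichotomy encoded by $\kappa_\lambda$.

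With this in hand I set $\epsilon=\min\{\epsilon_0,\ \delta_{l'}-\alpha_{l'},\ \tfrac12(\tilde{\beta}_l-\tilde{\alpha}_l)\}$, a quantity depending only on $H_0$ and not on $l$. This choice does two things at once: it places $(\tilde{\alpha}_l,\tilde{\alpha}_l+\epsilon)$ inside the lower half $(\alpha_{l'},\delta_{l'})$ of the standard band, where the first bound of Theorem~\ref{Thmunic} is available, and it guarantees that for each $E_i$ in this interval the left distance is the smaller one, so that $\min\{|E_i-\tilde{\alpha}_l|,|E_i-\tilde{\beta}_l|\}=E_i-\tilde{\alpha}_l$. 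Combining the displayed lower bound with the monotonicity of the sum under enlarging the index set, and then with Theorem~\ref{Thmunic}, yields
\begin{equation*}
  \sum_{E_i\in P\cap(\tilde{\alpha}_l,\tilde{\alpha}_l+\epsilon)}\min\{|E_i-\tilde{\alpha}_l|,|E_i-\tilde{\beta}_l|\}
  =\sum_{E_i\in P\cap(\tilde{\alpha}_l,\tilde{\alpha}_l+\epsilon)}(E_i-\tilde{\alpha}_l)
  \leq\frac{1}{c}\sum_{E_i\in P\cap(\alpha_{l'},\delta_{l'})}\frac{1}{\Gamma(E_i)}
  \leq\frac{A^2}{2c},
\end{equation*}
so the claim holds with $K=1/(2c)$, and the $\tilde{\beta}_l$ half follows in the same way from the second bound of Theorem~\ref{Thmunic} together with $\kappa_{\tilde{\beta}_l}=1$.

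I expect the only real obstacle to be the displayed asymptotic, namely pinning the blow-up of $\Gamma$ to the precise power $|E-\tilde{\alpha}_l|^{-1}$; everything after it is bookkeeping. The delicate point there is showing that $\omega$ has a genuine simple zero at the edge, so that $\omega^2$ is comparable to, and not merely bounded by, $E-\tilde{\alpha}_l$; this is where non-collapsedness is used in an essential way and where one must exclude accidental degeneracies of the Floquet data. Since this rate is exactly the heart of the non-collapsed analysis already carried out for Corollary~\ref{thm1}, Corollary~\ref{thm2} is in effect its restriction to the non-collapsed outer endpoints of the non-standard band.
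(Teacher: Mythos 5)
Your plan follows the paper's own route: localize to the outermost standard band $[\alpha_{l'},\beta_{l'}]$ inside the non-standard band, use that $\tilde{\alpha}_l=\alpha_{l'}$ is a non-collapsed edge so that the relevant exponent there is $1$, and note that on $(\tilde{\alpha}_l,\tilde{\alpha}_l+\epsilon)$ the minimum in the statement reduces to $E_i-\tilde{\alpha}_l$. The paper gets the conclusion in two lines from Corollary \ref{thm1} (for such $E_i$ one has $|E_i-\beta_{l'}|\geq 1/K$, whence $\min\{|E_i-\alpha_{l'}|^{\kappa_{\alpha_{l'}}},|E_i-\beta_{l'}|^{\kappa_{\beta_{l'}}}\}\geq |E_i-\tilde{\alpha}_l|/K$), while you quote Theorem \ref{Thmunic} directly and inline the content of Corollary \ref{thm1}; these are the same argument, and your bookkeeping with $\epsilon$ is fine.

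The gap is in how you justify the key input $1/\Gamma(E)\geq c\,(E-\tilde{\alpha}_l)$. You derive it from the two-sided asymptotics ``$\omega^2\sim c'(E-\tilde{\alpha}_l)$'' and ``$\int_0^1|\varphi|^4\,dx$ bounded above and away from $0$,'' and you propose to secure the first by excluding ``accidental degeneracies of the Floquet data.'' That degeneracy cannot be excluded. With the paper's normalization $\varphi(0,E)=-b(E)$, $\varphi'(0,E)=a(E)-e^{ik(E)}$, one has $\omega=2b(E)\sin k(E)$, and $b(E)$ --- whose zeros are the Dirichlet eigenvalues of the unit cell --- may perfectly well vanish at a non-collapsed edge $\lambda$: openness of the gap constrains $D'(\lambda)$, not $b(\lambda)$ (e.g.\ translate any potential with an open gap so that the edge eigenfunction vanishes at $x=0$). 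At such an edge $a=d=\pm1$, so $\varphi(0,E)$ and $\varphi'(0,E)$ both tend to $0$; then $\omega$ vanishes to higher order and $\int_0^1|\varphi|^4\,dx\to 0$, so both of your asymptotics fail for this normalization and ``$\omega$ has a genuine simple zero'' is false. What makes the needed inequality true anyway is a cancellation, which is exactly the paper's Lemma giving \eqref{Gapr111}: since
\begin{equation*}
  \frac{|\varphi(0,E)|^2+|\varphi'(0,E)|^2}{\omega}
  =\frac{b^2+(a-\cos k)^2+\sin^2 k}{2b\sin k}
  =\frac{b-c}{2\sin k},
\end{equation*}
the dangerous factor $b$ drops out ($b$ and $c$ cannot both vanish at a non-collapsed edge, since $b=c=0$ with $D=\pm2$ forces $T_0=\pm I$ and hence a collapsed edge), and Gr\"onwall then yields $\Gamma(E)\leq K/\sin^2 k(E)$ unconditionally. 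Combining this with Theorem \ref{ThmNonc}, i.e.\ $\sin^2 k(E)=\tfrac14(2-D(E))(2+D(E))\geq \tfrac1K(E-\tilde{\alpha}_l)$ near the non-collapsed edge, gives the one-sided bound $1/\Gamma(E)\geq c\,(E-\tilde{\alpha}_l)$; note you never need the two-sided asymptotic $\Gamma(E)\sim C/(E-\tilde{\alpha}_l)$, only this direction, and the one-sided version is insensitive to the degeneracy. With this substitution for your heuristic, your proof closes and coincides with the paper's.
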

 \begin{remark}
 In the continuous case, $\beta_l-\alpha_l=2\pi^2 l +O(1)$ as $k$ goes to infinity \cite{eastham1973spectral}. Thus we can choose  $ \epsilon \asymp\pi^2 l$ for large $l$.
 \end{remark}
\begin{corollary}\label{cor1}
Suppose $V_0(x)=0$ and $\limsup _{x\to\infty}x|V(x)|=A$. Then
    $P\cap (0,\infty)$ is a countable set. Moreover,
 \begin{equation*}
    \sum_{E_i\in P\cap (0,\infty)}E_i\leq \frac{ A^2}{2}.
 \end{equation*}
\end{corollary}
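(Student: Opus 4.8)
The plan is to specialize the general machinery to the free operator $H_0=-\frac{d^2}{dx^2}$ on $\R^+$ and then to exploit the fact that the free operator is periodic of \emph{every} period. First I would record the relevant spectral data for $V_0=0$. For $E>0$ the Floquet solution is simply $\varphi(x,E)=e^{i\sqrt{E}\,x}$, so that $|\varphi(x,E)|\equiv 1$, and a direct Wronskian computation gives $W(\overline\varphi,\varphi)=2i\sqrt E$, i.e. $\omega=2\sqrt E$. Hence
\[
\Gamma(E)=\int_0^1\frac{4}{\omega^2}\,|\varphi(x,E)|^4\,dx=\frac{4}{4E}=\frac1E,
\]
so that $1/\Gamma(E)=E$. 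The quantities $\varphi,\omega,\Gamma$ are intrinsic to the equation and depend only on $E$, not on which period one assigns to the identically zero potential $V_0$; likewise $A=\limsup_{x\to\infty}x|V(x)|$ is period-independent. (For a period $T$ the correctly normalized $\Gamma$ is the average over one period, $\tfrac1T\int_0^T\tfrac{4}{\omega^2}|\varphi|^4$, which again equals $1/E$.)

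The key idea is to regard $H_0$ as periodic of an arbitrary period $T>0$ instead of $1$. With period $T$ the discriminant is $2\cos(T\sqrt E)$, so the first standard spectral band is $[\alpha_1,\beta_1]=[0,(\pi/T)^2]$, and the quasimomentum $k(E)=T\sqrt E$ increases monotonically from $0$ to $\pi$ across it. The distinguished point $k(\delta_1)=\tfrac{\pi}{2}$ is therefore $\delta_1=(\pi/(2T))^2$, and the left half-band is $(\alpha_1,\delta_1)=(0,(\pi/(2T))^2)$. Applying the left-hand inequality \eqref{Gapr12Gamma} of Theorem \ref{Thmunic} to this band and substituting $1/\Gamma(E_i)=E_i$ gives
\[
\sum_{E_i\in P\cap(0,(\pi/(2T))^2)}E_i=\sum_{E_i\in P\cap(0,(\pi/(2T))^2)}\frac{1}{\Gamma(E_i)}\leq\frac{A^2}{2},
\]
and, since $A$ and $\Gamma$ are period-independent, this holds for \emph{every} $T>0$.

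Finally I would let $T\to 0^+$. The intervals $(0,(\pi/(2T))^2)$ increase to $(0,\infty)$, and because all summands $E_i$ are positive, monotone convergence yields $\sum_{E_i\in P\cap(0,\infty)}E_i\leq \frac{A^2}{2}$, which is the asserted bound. Countability of $P\cap(0,\infty)$ follows in the same way: for any $M>0$ the set $P\cap(0,M)$ lies inside the left half-band once $T$ is chosen small enough, hence is countable by Theorem \ref{Thmunic}, and $(0,\infty)=\bigcup_M(0,M)$ is a countable union.

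The main obstacle is not the computation but the conceptual point behind the period-choice trick: one must check that Theorem \ref{Thmunic} together with the value $\Gamma(E)=1/E$ really are intrinsic, so that the period-$1$ normalization built into the definitions is immaterial when $V_0\equiv 0$. A naive band-by-band application with period $1$ would only give one copy of $A^2/2$ per band and sum to $+\infty$; the essential observation is that shrinking the period lets the \emph{single} left half-band $(\alpha_1,\delta_1)$ sweep out all of $(0,\infty)$, so the one constant $A^2/2$ survives the limit and produces the global bound.
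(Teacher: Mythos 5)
Your proof is correct, but it takes a genuinely different route from the paper's. The paper does not invoke Theorem \ref{Thmunic} as a black box: it reruns the proof of \eqref{Gapr122} for the free operator, taking $\varphi(x,E)=e^{i\sqrt{E}x}$, $\gamma^{\prime}(x,E)=\sqrt{E}$, $\Gamma(E)=1/E$, and setting $\alpha=0$, $\beta=\delta=\infty$, i.e.\ treating all of $(0,\infty)$ as a single left half-band. That works because in the free case the oscillatory integrals of Lemma \ref{Lcon1} are controlled without the resonance restrictions $k(E_1)+k(E_2)\neq\pi$ and $k(E)\neq\frac{\pi}{2}$: the relevant phases are $\sqrt{E_1}\pm\sqrt{E_2}$ and $4\sqrt{E}$, which never vanish for distinct positive energies, so no splitting at $\delta$ is needed and a single copy of $A^2/2$ bounds the whole sum (this is essentially the argument of \cite{KLS}). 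Your period-rescaling trick instead keeps Theorem \ref{Thmunic} as a black box: view $V_0\equiv 0$ as $T$-periodic, apply \eqref{Gapr12Gamma} to the left half $(0,(\pi/(2T))^2)$ of the first band, and let $T\to 0^+$ with monotone convergence. The one point you flag --- that the theorem and $\Gamma(E)=1/E$ are period-independent --- deserves its two-line justification rather than an appeal to ``intrinsic-ness,'' since the paper's standing hypothesis is that $V_0$ is $1$-periodic: the change of variables $y=x/T$ maps the $T$-periodic problem to the $1$-periodic one, sending $E\mapsto T^2E$, $A\mapsto TA$, and $1/\Gamma(E)=E\mapsto T^2E$, so the inequality $\sum_i 1/\Gamma(E_i)\leq A^2/2$ is exactly scale-invariant and your claim for every $T>0$ follows from the $1$-periodic theorem alone; with that inserted your argument is complete. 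The trade-off: your route is shorter and purely formal given the theorem, while the paper's route exposes the structural reason the free case needs no half-band decomposition, which is what makes the sharp constant $A^2/2$ (rather than $A^2$, one copy per half-band) appear in both arguments.
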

\begin{corollary}\label{cor2}
Suppose $V_0(n)=0$ and $\limsup_{n\to\infty} n|V(n)|=A$. Then
    $  P\cap (-2,2)$ is a countable set. Moreover,
 \begin{equation*}
    \sum_{E_i\in P\cap (-2,2)}(4-E_i^2)\leq 4 A^2+4\min\{1,A\}.
 \end{equation*}
\end{corollary}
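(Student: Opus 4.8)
The plan is to specialize Theorem~\ref{Thmunic} to the free Jacobi operator and then to treat the band centre as an exceptional point. First I would record the Floquet data in the free case $a_n\equiv1$, $b_n\equiv0$, so that $q=1$. Here $\sigma_{\rm ess}(H_0)=[-2,2]$ is a single standard band $[\alpha,\beta]=[-2,2]$, and the Floquet solution is $\varphi(n,E)=e^{ik(E)n}$ with $E=2\cos k(E)$ and $k(E)\in(0,\pi)$; in particular $p(n,E)\equiv1$ and $|\varphi(n,E)|\equiv1$. A direct computation of the discrete Wronskian gives $W(\overline\varphi,\varphi)=e^{ik}-e^{-ik}=2i\sin k$, so $\omega=2\sin k$, whence
\begin{equation*}
  \Gamma(E)=\frac{4}{\omega^2}|\varphi(1,E)|^4=\frac{1}{\sin^2 k(E)}=\frac{4}{4-E^2},\qquad \frac{1}{\Gamma(E)}=\frac{4-E^2}{4}.
\end{equation*}
The unique point $\delta\in[-2,2]$ with $k(\delta)=\tfrac\pi2$ is $\delta=0$.

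Next I would apply Theorem~\ref{Thmunic} to the two half-bands $(\alpha,\delta)=(-2,0)$ and $(\delta,\beta)=(0,2)$. Multiplying each of the two inequalities of that theorem by $4$ and adding gives
\begin{equation*}
  \sum_{E_i\in P\cap(-2,2),\,E_i\neq0}(4-E_i^2)=4\sum_{E_i\in P\cap(-2,2),\,E_i\neq0}\frac{1}{\Gamma(E_i)}\le 4\left(\frac{A^2}{2}+\frac{A^2}{2}\right)=4A^2 .
\end{equation*}
By Theorem~\ref{Thmunic} the set $P\cap(-2,2)$ is countable, so the full sum over $P\cap(-2,2)$ is the displayed sum plus the single possible term at the centre $E=0$, which equals $(4-0^2)=4$ exactly when $0\in P$.

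It therefore remains to bound the contribution of the band centre by $4\min\{1,A\}$. Since this contribution is $4$ if $0\in P$ and $0$ otherwise, the whole corollary reduces to the implication $0\in P\Rightarrow A\ge1$. Here I would run the discrete Prüfer (EFGP) analysis at $E=0$, i.e.\ at $\theta=k=\tfrac\pi2$. In those variables the logarithmic derivative of the Prüfer radius is governed, to leading order, by $\tfrac{V(n)}{\sin\theta}\sin(2\theta_n)$ with phase $\theta_n\approx\tfrac\pi2 n+\beta$. Because $2\theta=\pi$, only the $(-1)^n$ (frequency-$\pi$) component of $nV(n)$ is secular, and its amplitude is at most $\limsup_n n|V(n)|=A$. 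Consequently the fastest possible decay of an eigensolution is $|u(n)|^2\asymp n^{-A}$, so membership in $\ell^2(\N)$ forces $A>1$, and a fortiori $A\ge1$. Combining, the band-centre term is $\le4\min\{1,A\}$, and adding it to the bound $4A^2$ yields
\begin{equation*}
  \sum_{E_i\in P\cap(-2,2)}(4-E_i^2)\le 4A^2+4\min\{1,A\},
\end{equation*}
as claimed.

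The main obstacle is this last step. Away from the centre the summability is handed to us directly by Theorem~\ref{Thmunic}, but exactly at $E=0$ the resonance frequency $2k=\pi$ sits at the edge of the available phase range, so the two sidebands of the Wigner--von Neumann resonance coalesce. This degeneracy is precisely what changes the effective threshold from the naive interior value $A>2\sin\theta$ to $A>1$, and it must be extracted by tracking the $(-1)^n$ Fourier mode of $nV(n)$ through the $\limsup$ hypothesis rather than through any $\ell^1$-type bound; care is needed because a non-resonant remainder of $V$ does not contribute to the secular decay and hence cannot lower the threshold.
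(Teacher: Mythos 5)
Your proposal is correct and follows the same two-step route as the paper: specialize Theorem \ref{Thmunic} to the free Jacobi matrix (where $\Gamma(E)=4/(4-E^2)$ and $\delta=0$), obtaining $\sum_{E_i\in P\cap(-2,2),\,E_i\neq 0}(4-E_i^2)\le 4A^2$, and then reduce the band-centre contribution to the implication $0\in P\Rightarrow A\ge 1$. The only real divergence is how that implication is handled: the paper simply cites Lemma \ref{Leapr12d1} (Theorem 2.1 of \cite{liu2018criteria}), whereas you sketch a Pr\"ufer/EFGP argument for it. Your sketch has the right mechanism and can be made rigorous, but two remarks are in order. First, for the \emph{exclusion} direction the $(-1)^n$-resonance bookkeeping you emphasize is not needed: at $E=0$ one has $\omega=2$ and $|\varphi|\equiv 1$, so the discrete Pr\"ufer recursion in Section \ref{Secdis} gives $R(n+1)^2/R(n)^2 = 1-V(n+1)\sin 2\theta+V(n+1)^2\sin^2\theta \ge 1-|V(n+1)|$ directly from $|\sin 2\theta|\le 1$; summing logarithms yields $\ln R(n)^2\ge -(A+\varepsilon)\ln n-O(1)$, hence $R(n)^2\gtrsim n^{-(A+\varepsilon)}$, and by the discrete analogue of \eqref{Gformulanorm} an $\ell^2$ solution forces $A+\varepsilon>1$ for every $\varepsilon>0$, i.e.\ $A\ge 1$. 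The resonance structure is what makes the threshold $1$ \emph{sharp} (it is needed to construct an embedded eigenvalue at $E=0$ when the resonant component is large), not to rule one out. Second, your stated conclusion that $\ell^2$-membership ``forces $A>1$'' overstates what the estimate gives; only $A\ge 1$ follows after letting $\varepsilon\to 0$, but $A\ge 1$ is exactly what the term $4\min\{1,A\}$ requires, so the corollary is unaffected.
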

For the discrete case, $H_0$ has $q$ standard spectral bands and the set of spectrum is bounded. Then the constant $K$ ($\epsilon$) only depends on $H_0$.

\begin{corollary}\label{cor3}
 Suppose  $\limsup_{n\to \infty} n|V(n)|=A$. Let $d(E)=\min_{l}\text{dist}(E,\{\alpha_l,\beta_l\})$. Then   there is a constant $K$   such that
 \begin{equation*}
    \sum_{E_i\in P \cap \sigma_{\rm ess}(H_0)} d^2(E_i)\leq K A^2.
 \end{equation*}
 \end{corollary}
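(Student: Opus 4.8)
The plan is to reduce Corollary \ref{cor3} directly to the ``in particular'' bound of Corollary \ref{thm1}, exploiting the two features special to the Jacobi setting recorded just above the statement: $H_0$ has only finitely many (namely $q$) standard spectral bands, and its spectrum is bounded. Write $\sigma_{\mathrm{ess}}(H_0)=\bigcup_{l=1}^{q}[\alpha_l,\beta_l]$. Every embedded eigenvalue $E_i\in P\cap\sigma_{\mathrm{ess}}(H_0)$ then lies in one of these finitely many bands, and the whole argument is a matter of organizing the sum band by band.

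First I would partition the sum according to which band each eigenvalue belongs to. Each band edge $\alpha_l,\beta_l$ belongs to the finite set $\bigcup_m\{\alpha_m,\beta_m\}$ over which $d(E)$ minimizes, so any $E_i$ equal to a band edge has $d(E_i)=0$ and contributes nothing; hence I may restrict to eigenvalues in the open bands. Since the interiors of distinct standard bands are disjoint (Floquet bands either are separated or touch only at a single endpoint), no eigenvalue is double counted, and
\[
   \sum_{E_i\in P\cap\sigma_{\mathrm{ess}}(H_0)}d^2(E_i)
   =\sum_{l=1}^{q}\ \sum_{E_i\in P\cap(\alpha_l,\beta_l)}d^2(E_i).
\]

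The key elementary observation is that for $E\in(\alpha_l,\beta_l)$ one has
\[
   d(E)=\min_{m}\operatorname{dist}(E,\{\alpha_m,\beta_m\})
   \le \operatorname{dist}(E,\{\alpha_l,\beta_l\})
   =\min\{|E-\alpha_l|,|E-\beta_l|\},
\]
simply because the minimum over all $m$ is no larger than the single term $m=l$. Squaring and substituting into each inner sum, I would bound
\[
   \sum_{E_i\in P\cap(\alpha_l,\beta_l)}d^2(E_i)
   \le \sum_{E_i\in P\cap(\alpha_l,\beta_l)}\min\{|E_i-\alpha_l|^2,|E_i-\beta_l|^2\}
   \le K_l A^2,
\]
where the last inequality is precisely the ``in particular'' (exponent $2$) bound of Corollary \ref{thm1} applied to the $l$-th standard band, with $K_l$ uniform over the bounded spectrum.

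Summing over the finitely many bands finishes the proof: setting $K=\sum_{l=1}^{q}K_l$, a constant depending only on $H_0$, yields $\sum_{E_i}d^2(E_i)\le KA^2$. There is no genuine analytic obstacle here, since all the work has already been absorbed into Theorem \ref{Thmunic} and Corollary \ref{thm1}. The only point requiring care is the bookkeeping — verifying that each interior eigenvalue is counted exactly once, that boundary eigenvalues drop out because $d=0$ there, and that it is precisely the finiteness of the number of bands (a feature peculiar to the Jacobi case) that upgrades the per-band estimates into a single global constant $K$.
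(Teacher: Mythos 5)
Your proposal is correct and takes essentially the same approach as the paper: the paper's own proof of Corollary \ref{cor3} is a one-line reduction to Corollary \ref{thm1} (together with Corollary \ref{thm2}, which is only needed for Corollary \ref{cor4}), justified by the fact that the discrete operator has finitely many spectral bands and bounded spectrum. Your band-by-band bookkeeping, the observation that $d(E)\le\min\{|E-\alpha_l|,|E-\beta_l|\}$ on the $l$-th band, and the application of the exponent-$2$ bound of Corollary \ref{thm1} are exactly the details the paper leaves implicit.
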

 \begin{corollary}\label{cor4}
 Suppose  $\limsup_{n\to\infty} n|V(n)|=A$. Then there exist   constants $\epsilon>0$ and $K>0$  such that
 \begin{equation*}
    \sum_{E_i }\min\{|E_i-\tilde{\alpha}_l|,|E_i-\tilde{\beta}_l|\}\leq K A^2,
 \end{equation*}
 where $\{E_i\}$  goes over all the values in set   $P\cap \left(\cup_l(\tilde{\alpha}_l,\tilde{\alpha}_l+\epsilon)\cup(\tilde{\beta}_l-\epsilon,\tilde{\beta}_l)\right)$.
 \end{corollary}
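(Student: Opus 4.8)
The plan is to deduce this from Corollary \ref{thm2} applied band by band, exploiting the crucial fact—recorded just before the statement—that in the discrete case $H_0$ has only finitely many (at most $q$) standard spectral bands, hence only finitely many non-standard bands $[\tilde{\alpha}_l,\tilde{\beta}_l]$, and the whole spectrum lies in a bounded set. This finiteness is exactly what makes a single global bound possible here, whereas in the continuous case one would have to control the $l$-dependence of the constants across infinitely many bands.

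First I would fix $\epsilon>0$ as furnished by Corollary \ref{thm2}. By that statement $\epsilon$ depends only on $H_0$ and not on $l$, so one value of $\epsilon$ serves every non-standard band simultaneously. Shrinking $\epsilon$ if necessary so that $\epsilon<\tfrac{1}{2}\min_l(\tilde{\beta}_l-\tilde{\alpha}_l)$—a strictly positive quantity, since there are finitely many non-degenerate non-standard bands—I guarantee that for $E_i\in(\tilde{\alpha}_l,\tilde{\alpha}_l+\epsilon)$ one has $\min\{|E_i-\tilde{\alpha}_l|,|E_i-\tilde{\beta}_l|\}=|E_i-\tilde{\alpha}_l|$, and symmetrically near the right edge. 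Thus the $\min$ appearing in the target sum is precisely the quantity already controlled by Corollary \ref{thm2}, and the set $\cup_l\big((\tilde{\alpha}_l,\tilde{\alpha}_l+\epsilon)\cup(\tilde{\beta}_l-\epsilon,\tilde{\beta}_l)\big)$ is a disjoint union of these edge-neighborhoods.

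Next, for each $l$ Corollary \ref{thm2} supplies a constant $K_l$ with
\begin{equation*}
\sum_{E_i\in P\cap(\tilde{\alpha}_l,\tilde{\alpha}_l+\epsilon)}\min\{|E_i-\tilde{\alpha}_l|,|E_i-\tilde{\beta}_l|\}\le K_l A^2,\qquad \sum_{E_i\in P\cap(\tilde{\beta}_l-\epsilon,\tilde{\beta}_l)}\min\{|E_i-\tilde{\alpha}_l|,|E_i-\tilde{\beta}_l|\}\le K_l A^2.
\end{equation*}
Summing these inequalities over the finitely many non-standard bands then yields the claim with $K=2\sum_l K_l$, a finite constant depending only on $H_0$.

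The only point requiring care—and the reason the statement is phrased specifically for the discrete case—is the uniformity of the constants. Because $K_l$ is, by Corollary \ref{thm2}, bounded uniformly on bounded sets of band indices, and the discrete spectrum is bounded, the finitely many $K_l$ admit a common upper bound, so no divergence arises from accumulating contributions across bands. I expect this bookkeeping, rather than any fresh analytic input, to be the entire substance of the argument: the decaying-perturbation estimate has already been done in Theorem \ref{Thmunic} and packaged for band edges in Corollary \ref{thm2}, and Corollary \ref{cor4} is the global statement one gets for free once finiteness of the band structure is invoked.
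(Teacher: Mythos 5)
Your proposal is correct and follows essentially the same route as the paper: the paper's proof of Corollary \ref{cor4} likewise just applies Corollaries \ref{thm1} and \ref{thm2} band by band and sums, using that the discrete operator has only finitely many spectral bands (all contained in a bounded set, so the finitely many constants $K_l$ admit a common bound). Your extra step of shrinking $\epsilon$ to identify the $\min$ with the distance to the nearer edge is harmless but unnecessary, since Corollary \ref{thm2} already bounds the $\min$ itself.
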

\begin{remark} \label{re}

\begin{itemize}
\item All the bounds go to 0 as $A$ goes to zero. It implies that there is no $L^2(\R^+)$ (or $\ell^2(\N)$) solution of $(H_0+V)u=Eu$  for $E\in \cup_l(\alpha_l,
\beta_l)$ if  $V(x)=\frac{o(1)}{1+x}$
($V(n)=\frac{o(1)}{1+n}$). Thus under any fixed boundary condition at $0$,  there is no eigenvalues of the operator  $H_0+V$  embedded  into its  bands if  $V(x)=\frac{o(1)}{1+x}$
($V(n)=\frac{o(1)}{1+n}$).
\item Corollary \ref{cor1} (Corollary \ref{cor2}) with exactly the same bound has been proved in \cite{KLS} (\cite{liu2018criteria}). It means our bounds in Theorem \ref{Thmunic}  are  effective.
    \item  Fix $l$. Corollary \ref{thm1} implies the speed of  $E_i\in P$ going to the collapsed boundaries behaves  like $|E_i-\alpha_l|\asymp\frac{1}{1+\sqrt{i}}$ ($|E_i-\beta_l|\asymp\frac{1}{1+\sqrt{i}}$). The speed going to non-collapsed boundaries behaves like
        $|E_i-\alpha_l|\asymp\frac{1}{1+ {i}}$ ($|E_i-\beta_l|\asymp\frac{1}{1+{i}}$).
        Remling \cite{remling2000schrodinger} shows that we can not improve it to $|E_i-\alpha_l|\asymp\frac{1}{1+ {i}^{1+\epsilon}}$ in the continuous case. This means that our main results are optimal in some sense.
\end{itemize}
\end{remark}
Our paper is organized in the following way. In Section \ref{SecFlo}, we will introduce the Floquet solution and generalized Pr\"ufer transformation.
In Section \ref{Seccon}, we will give the proof of  all the results in the    continuous setting. In Section \ref{Secdis}, we will give the proof of  all the results in the   discrete setting.

In the following, $E$ is always in some spectral band  $(\alpha_l,\beta_l)$. The constant $C$   depends on $E$, and so does  $O(1) $.
$K$  depends on $E$ (or say $l$) uniformly  in any compact set.
\section{Floquet solution and generalized Pr\"ufer transformation for the continuous Schr\"odinger operator}\label{SecFlo}
Let $T_0(E)$ be the transfer matrix of $H_0$ from $0$ to $1$, that is
\begin{equation*}
  T_0(E)\left(\begin{array}{cc}
                u(0) \\ u^\prime(0)
              \end{array}
  \right)=\left(\begin{array}{cc}
                u(1) \\ u^\prime(1)
              \end{array}
  \right)
\end{equation*}
for any solution $u$ of $H_0u=Eu$.

Let
\begin{equation*}
  T_0(E)=\left(
           \begin{array}{cc}
             a(E) & b(E) \\
             c(E) & d(E) \\
           \end{array}
         \right)
\end{equation*}
and $D(E)=a(E)+d(E)$
\begin{theorem}\cite[Theorem 2.3.3]{eastham1973spectral}\label{ThmNonc}
Let $\Lambda=\cup_l\{\alpha_l,\beta_l\}$.
Then for any $\lambda\in \Lambda$, $D(\lambda)$ is either $2$ or $-2$.  Furthermore, one of  $D^\prime{(\lambda)}$ and $D^{\prime\prime}{(\lambda)}$
must be non-zero.
\end{theorem}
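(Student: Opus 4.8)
The plan is to recover Theorem~\ref{ThmNonc} from the classical Floquet/Hill discriminant theory, isolating the nondegeneracy at the band edges as the only nontrivial point. First I would record the elementary facts. Since $T_0(E)$ is the monodromy matrix of a second order equation whose Wronskian is conserved, $\det T_0(E)=1$, and $D(E)=\operatorname{tr}T_0(E)$ is entire in $E$. The eigenvalues $\rho,\rho^{-1}$ of $T_0(E)$ obey $\rho+\rho^{-1}=D(E)$, and the standard characterization $\sigma(H_0)=\{E:|D(E)|\le 2\}$ shows that every band edge $\lambda\in\Lambda$ satisfies $|D(\lambda)|=2$, i.e.\ $D(\lambda)=2$ or $D(\lambda)=-2$. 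This already gives the first assertion, so the remaining task is the claim that $D'(\lambda)$ and $D''(\lambda)$ cannot both vanish.

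To access the derivatives I would fix the fundamental solutions $c(x,E),s(x,E)$ of $H_0u=Eu$ with $c(0)=1$, $c'(0)=0$, $s(0)=0$, $s'(0)=1$, so that $T_0(E)=\left(\begin{smallmatrix} c(1,E) & s(1,E)\\ c'(1,E) & s'(1,E)\end{smallmatrix}\right)$ and $D(E)=c(1,E)+s'(1,E)$, with conserved Wronskian $W(c,s)\equiv 1$. Differentiating $H_0u=Eu$ in $E$ shows that $\partial_E c$ and $\partial_E s$ solve the inhomogeneous equation $(H_0-E)w=u$ with vanishing Cauchy data; variation of parameters then produces an explicit integral expression for $D'(E)$ as a quadratic form in the entries $c(1),s(1),c'(1),s'(1)$ whose coefficients are $\int_0^1 c^2$, $\int_0^1 cs$, and $\int_0^1 s^2$.

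Next I would set up the dichotomy at a fixed edge $\lambda$ with $D(\lambda)=\varepsilon\,2$, $\varepsilon=\pm1$. Either $T_0(\lambda)=\varepsilon I$, meaning every solution is (anti)periodic, which is the closed–gap situation; or $T_0(\lambda)\neq\varepsilon I$ is a nontrivial Jordan block with a one–dimensional $\varepsilon$–eigenspace. In the Jordan case there is a unique (up to scalar) real (anti)periodic solution $\psi$, and I would substitute the kernel relation $(T_0(\lambda)-\varepsilon I)(\psi(0),\psi'(0))^{\top}=0$ into the integral formula for $D'(\lambda)$ and reorganize it into the manifestly signed form $D'(\lambda)=\pm(\text{positive constant})\int_0^1\psi(x)^2\,dx$, which is nonzero because $\psi\not\equiv 0$ is real. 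In the identity case the same formula collapses to $D'(\lambda)=0$, so I would pass to the second variation and show that $D''(\lambda)$ equals, up to a nonzero factor depending only on $\varepsilon$, the Gram determinant $\bigl(\int_0^1 c^2\bigr)\bigl(\int_0^1 s^2\bigr)-\bigl(\int_0^1 cs\bigr)^2$, which is strictly positive by Cauchy--Schwarz since $c(\cdot,\lambda)$ and $s(\cdot,\lambda)$ are linearly independent on $[0,1]$. Hence $D''(\lambda)\neq 0$, and in both cases one of $D'(\lambda),D''(\lambda)$ is nonzero.

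The main obstacle is the purely algebraic bookkeeping in the last paragraph: the integral expressions coming from variation of parameters are not symmetric as written, and turning $D'(\lambda)$ into a multiple of $\int_0^1\psi^2$ and $D''(\lambda)$ into a Gram determinant requires repeated use of $\det T_0=1$, the conserved Wronskian, and the edge relation $T_0(\lambda)-\varepsilon I$, together with careful sign tracking across the two cases $\varepsilon=\pm1$. Once these identities are assembled the nonvanishing is immediate, so the difficulty is entirely in producing the reductions cleanly rather than in any conceptual step.
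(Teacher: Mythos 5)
The paper contains no proof of this statement at all: it is quoted directly from Eastham \cite[Theorem 2.3.3]{eastham1973spectral}, so the only ``approach'' in the paper is deferral to the classical literature. Your proposal is a correct reconstruction of precisely that classical argument (as in Eastham and in Magnus--Winkler's treatment of Hill's equation): the variation-of-parameters formula $D'(E)=\int_0^1\bigl[s(1)c^2+(s'(1)-c(1))cs-c'(1)s^2\bigr]\,dt$, whose discriminant $(s'(1)-c(1))^2+4s(1)c'(1)$ vanishes exactly when $D^2=4$, so that in the Jordan-block case the form collapses to $D'(\lambda)=\pm(\text{positive})\int_0^1\psi^2\neq 0$, while in the case $T_0(\lambda)=\varepsilon I$ one computes $D''(\lambda)=-2\varepsilon\bigl[\int_0^1 c^2\int_0^1 s^2-(\int_0^1 cs)^2\bigr]\neq 0$ by strict Cauchy--Schwarz. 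So your plan is sound and the ``algebraic bookkeeping'' you flag does go through exactly as you predict; there is no gap, and no divergence from the source the paper relies on.
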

\begin{remark}
$D^\prime{(\lambda)}\neq 0$ if and only if $\lambda$ is non-collapsed.
\end{remark}
 For $E\in (\alpha_l,\beta_l)$, let $k(E)\in (0,\pi)$ be  the quasimomentum.
Let $\varphi(x,E)$ be a solution of $H_0u=Eu$ with $E\in (\alpha_l,\beta_l)$ and satisfies boundary condition
$\varphi(0)=-b(E)$ and $\varphi^{\prime}(0)=a(E)-e^{ik(E)}$. It is easy to check that
\begin{equation*}
  T_0(E)
   \left(
           \begin{array}{c}
             \varphi(0)  \\
             \varphi^{\prime}(0)
           \end{array}
         \right)=e^{ik(E)} \left(
           \begin{array}{c}
             \varphi(0) \\
              \varphi^{\prime}(0)
           \end{array}\right).
\end{equation*}
We get a  Floquet solution $\varphi(x,E)$.

Since $\varphi(x,E)$ is a solution of $H_0u=Eu$, the Wronskian $W(\overline{\varphi},\varphi)$ is constant and
\begin{equation*}
 W(\overline{\varphi},\varphi)=2i \text{ Im } (\overline{\varphi}\varphi^{\prime})
\end{equation*}
Let $\omega(E)\in \R$ be such that
\begin{equation*}
  2i \text{ Im } (\overline{\varphi(x,E)}\varphi^{\prime}(x,E))=i\omega(E).
\end{equation*}
Thus
\begin{equation*}
  \omega(E)=2b(E)\sin k(E).
\end{equation*}
Without loss of generality, assume   $\omega(E)>0$ (i.e., $b(E)>0$).

Let $\gamma(x,E)$ be such that
\begin{equation*}
  \varphi(x,E) =|\varphi(x,E)|e^{i \gamma(x,E)}.
\end{equation*}
Then (See \cite[Prop.2.1]{KRS})
\begin{equation}\label{Ggamma}
  \gamma^{\prime}(x,E)=\frac{\omega(E)}{2|\varphi(x,E)|^2}.
\end{equation}
\begin{proposition}[ Proposition 2.2 and Theorem 2.3bc of \cite{KRS}] \label{thmformula}
Suppose $u$ is a real solution of \eqref{Gu}.
Then there exist real functions $ R(x,E)>0$ and $\theta(x,E)$ such that
\begin{equation}\label{GformulaR}
    [\ln R(x,E)]^\prime=\frac{V(x)}{2\gamma^\prime(x,E)} \sin2 \theta(x,E)
\end{equation}
and
\begin{equation}\label{Gformulatheta}
   \theta(x,E)^\prime=\gamma^\prime(x,E)-\frac{V(x)}{2\gamma^\prime(x,E)} \sin^2 \theta(x,E).
\end{equation}
Moreover,  there exists a constant $C$(depending on $E$) such that
\begin{equation}\label{Gformulanorm}
    \frac{|u(x,E)|^2+|u^\prime(x,E)|^2}{C} \leq R(x,E)^2\leq C (|u(x,E)|^2+|u^\prime(x,E)|^2).
\end{equation}

\end{proposition}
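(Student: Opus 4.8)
The plan is to construct $R$ and $\theta$ by a variation-of-parameters (modified Pr\"ufer) argument carried out in the Floquet frame $\{\varphi,\overline\varphi\}$. First I would record that for $E$ in the open band $(\alpha_l,\beta_l)$ the Floquet solution $\varphi$ never vanishes: if $\varphi(x_0)=0$, then $\left(\varphi(x_0),\varphi'(x_0)\right)$ and $\left(\overline\varphi(x_0),\overline\varphi'(x_0)\right)$ would be parallel, contradicting $W(\overline\varphi,\varphi)=i\omega\neq0$. Hence $\{\varphi,\overline\varphi\}$ is a basis of the solution space of $H_0u=Eu$, the modulus $|\varphi|$ is a strictly positive $1$-periodic function, and $\gamma$ together with the Pr\"ufer variables below are globally well defined with $\gamma'>0$.

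Given a real solution $u$ of \eqref{Gu}, I would write $\left(u,u'\right)=a\left(\varphi,\varphi'\right)+\overline a\left(\overline\varphi,\overline\varphi'\right)$ for a complex coefficient $a=a(x,E)$, the conjugate pairing being forced by $u$ real. Imposing the usual constraint $a'\varphi+\overline{a'}\,\overline\varphi=0$ and using $\varphi''=(V_0-E)\varphi$ together with $u''=(V+V_0-E)u$, Cramer's rule (whose denominator is exactly the Wronskian $W(\varphi,\overline\varphi)=-i\omega$) collapses the system to the single first-order equation $a'=\frac{V\,u\,\overline\varphi}{i\omega}$. Writing $a=|a|e^{i\psi}$ and $\varphi=|\varphi|e^{i\gamma}$, so that $u=2\,\mathrm{Re}(a\varphi)=2|a|\,|\varphi|\cos(\psi+\gamma)$, I would then compute $\frac{a'}{a}=\frac{V|\varphi|^2}{i\omega}\bigl(1+e^{-2i(\psi+\gamma)}\bigr)$ and invoke \eqref{Ggamma} in the form $\frac{|\varphi|^2}{\omega}=\frac{1}{2\gamma'}$ to replace the periodic factor $|\varphi|^2/\omega$ by $1/(2\gamma')$.

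Setting $R=|a|$ and letting $\theta$ be the phase $\psi+\gamma$ (up to a fixed shift and a choice of branch), the real part of $a'/a$ produces $[\ln R]'$ and the imaginary part, combined with the drift $\gamma'$, produces $\theta'$; reorganizing the cross terms with $1\pm\cos2\theta$ and $2\sin\theta\cos\theta=\sin2\theta$ yields equations of the stated form \eqref{GformulaR}--\eqref{Gformulatheta}. For the two-sided bound \eqref{Gformulanorm}, I would observe that $|\varphi|$ and $|\varphi'|$ are $1$-periodic and continuous, so $|\varphi|^2+|\varphi'|^2$ is a positive $1$-periodic function pinched between positive constants depending on $E$, while the linear change of frame $(a,\overline a)\mapsto(u,u')$ has determinant equal to the nonzero constant Wronskian; hence $|u|^2+|u'|^2\asymp|a|^2=R^2$ with an $E$-dependent constant $C$, which is exactly \eqref{Gformulanorm}.

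The main obstacle is the bookkeeping in the phase step: one must verify that the modulation of the periodic factor $|\varphi|$ does not leak into the amplitude equation. The point is that $R=|a|$ is the \emph{intrinsic} amplitude in the Floquet frame, and it is precisely the relation $\gamma'=\omega/(2|\varphi|^2)$ that absorbs all of the background oscillation, so that $[\ln R]'$ is driven by $V$ alone and $\theta'$ is $\gamma'$ minus a $V$-term. Keeping the branch of $\theta$ and the conjugate symmetry $b=\overline a$ consistent throughout is the only genuinely delicate part; the norm equivalence is then routine from periodicity and the nonvanishing of the Wronskian.
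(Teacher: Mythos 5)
The paper itself contains no proof of Proposition \ref{thmformula}: it is imported verbatim, as a citation, from Proposition 2.2 and Theorem 2.3bc of \cite{KRS}. Your proposal therefore supplies a derivation where the paper has none, and the route you choose is exactly the standard one behind the KRS result: variation of parameters in the frame $\{\varphi,\overline\varphi\}$ (legitimate since the nonzero constant Wronskian forces $\varphi\neq 0$ everywhere), reality of $u$ forcing the conjugate pairing of coefficients, Cramer's rule giving $a'=Vu\overline\varphi/(i\omega)$, and the identity $|\varphi|^2/\omega=1/(2\gamma')$ from \eqref{Ggamma}. All of these steps are correct, and your argument for \eqref{Gformulanorm} (bounded periodic frame entries together with the constant nonzero determinant $-i\omega$, hence a bounded inverse) is also correct.

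The genuine problem is the step you compress into ``reorganizing the cross terms \dots yields equations of the stated form'': it does not. Finishing your own computation, with $\theta_*=\psi+\gamma$ one gets
\begin{equation*}
\frac{a'}{a}=-\frac{V}{2\gamma'}\sin 2\theta_*-\frac{iV}{2\gamma'}\bigl(1+\cos 2\theta_*\bigr),
\end{equation*}
so $[\ln R]'=-\frac{V}{2\gamma'}\sin 2\theta_*$ and $\theta_*'=\gamma'-\frac{V}{2\gamma'}\bigl(1+\cos 2\theta_*\bigr)$. The only constant shifts $\theta=\theta_*\pm\pi/2\ (\mathrm{mod}\ \pi)$ that restore the sign required in \eqref{GformulaR} turn the phase equation into
\begin{equation*}
\theta'=\gamma'-\frac{V}{2\gamma'}\bigl(1-\cos 2\theta\bigr)=\gamma'-\frac{V}{\gamma'}\sin^2\theta,
\end{equation*}
whose potential term carries the coefficient $V/\gamma'$, not the $V/(2\gamma')$ printed in \eqref{Gformulatheta}; no choice of branch or phase shift produces \eqref{GformulaR} and \eqref{Gformulatheta} simultaneously as stated. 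In fact your computation is the correct one and the printed \eqref{Gformulatheta} has a factor-of-two misprint: in the free case $V_0=0$, $\gamma'=k$, the classical Pr\"ufer equations (those of \cite{KLS}, to which Corollary \ref{cor1} must reduce) read $\theta'=k-\frac{V}{k}\sin^2\theta$ and $(\ln R)'=\frac{V}{2k}\sin 2\theta$. So your method is sound and, carried out explicitly, proves the proposition in its corrected form; but as written the proposal asserts it lands on an equation it cannot land on. A correct write-up must do the last reorganization explicitly, record the phase equation as $\theta'=\gamma'-\frac{V}{2\gamma'}(1-\cos 2\theta)$, and flag the discrepancy with the statement rather than claim agreement.
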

\begin{lemma}
The following estimate holds,
\begin{equation}\label{Gapr111}
    \frac{1}{\gamma^{\prime}(x,E)}  \leq \frac{K(E)}{\sin k(E)},
\end{equation}
and then (in the continuous case)
\begin{equation*}
  \Gamma(E)\leq \frac{K(E)}{|\sin k(E)|^2},
\end{equation*}
where $K(E)$ is uniformly bounded on any compact set of $E$.
\end{lemma}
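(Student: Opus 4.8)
The plan is to reduce both inequalities to a single pointwise bound on $|\varphi(x,E)|^2/b(E)$ and then to establish that bound from the Floquet structure of $\varphi$. First I would combine the phase formula \eqref{Ggamma} with the identity $\omega(E)=2b(E)\sin k(E)$ to write
\begin{equation*}
  \frac{1}{\gamma^\prime(x,E)}=\frac{2|\varphi(x,E)|^2}{\omega(E)}=\frac{|\varphi(x,E)|^2}{b(E)\sin k(E)}.
\end{equation*}
Thus \eqref{Gapr111} is equivalent to the claim that $|\varphi(x,E)|^2/b(E)\le K(E)$ for all $x$, with $K(E)$ uniformly bounded on compact subsets of $(\alpha_l,\beta_l)$.

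To prove this claim I would use the Floquet representation \eqref{Gfqp}: since $k(E)\in(0,\pi)$ is real for $E$ inside the band, $|\varphi(x,E)|=|p(x,E)|$ is $1$-periodic in $x$, and $\varphi$ (hence $|\varphi|^2$) depends jointly continuously on $(x,E)$ because the initial data $\varphi(0)=-b(E)$, $\varphi^\prime(0)=a(E)-e^{ik(E)}$ vary continuously with $E$ and solutions of $H_0u=Eu$ depend continuously on $E$ and on their initial data. Restricting to a compact set $[c,d]\subset(\alpha_l,\beta_l)$, the supremum of $|\varphi(x,E)|^2$ over $[0,1]\times[c,d]$ is finite, and by periodicity this bounds $|\varphi(x,E)|^2$ for every $x\in\R$. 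On the same compact set $b(E)$ is continuous and strictly positive; positivity holds on the whole open band because $\omega(E)=2b(E)\sin k(E)>0$ while $\sin k(E)>0$, so $1/b(E)$ is bounded above there. Setting $K(E)=\sup_{x}|\varphi(x,E)|^2/b(E)$ then produces a quantity uniformly bounded on compact sets, which proves \eqref{Gapr111}.

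For the bound on $\Gamma(E)$ I would observe that squaring the phase formula gives $4|\varphi(x,E)|^4/\omega(E)^2=1/\gamma^\prime(x,E)^2$, so
\begin{equation*}
  \Gamma(E)=\int_0^1\frac{4}{\omega(E)^2}|\varphi(x,E)|^4\,dx=\int_0^1\frac{dx}{\gamma^\prime(x,E)^2}\le\int_0^1\frac{K(E)^2}{\sin^2 k(E)}\,dx=\frac{K(E)^2}{|\sin k(E)|^2}.
\end{equation*}
Since $K(E)^2$ is again uniformly bounded on compact sets, absorbing it into a redefined $K(E)$ yields the stated estimate.

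The only genuinely delicate point is the division by $b(E)$: one must check that $b(E)$ does not vanish inside the band and stays bounded away from zero on compact subsets. This is exactly where the normalization of $\varphi$ and the assumption $\omega(E)>0$ enter, and away from the band edges it follows at once from continuity together with $b(E)>0$. The boundary behaviour as $E\to\alpha_l$ or $E\to\beta_l$, where $\sin k(E)\to0$ and $\Gamma$ blows up, is deliberately excluded by working only on compact subsets of the open band, which is all the statement requires.
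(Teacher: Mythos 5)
There is a genuine gap, and it sits exactly at the point you dismiss in your last paragraph as ``deliberately excluded.'' Your argument produces $K(E)=\sup_x|\varphi(x,E)|^2/b(E)$ and establishes its finiteness only on compact subsets of the \emph{open} band $(\alpha_l,\beta_l)$. But on such sets the lemma is essentially trivial: $\Gamma(E)$ and $1/\sin^2k(E)$ are both finite, positive and continuous there, so a bound of this shape follows from compactness alone. The whole content of the lemma is uniformity of $K(E)$ up to and including the band edges: in the paper the bound $\Gamma(E)\le K(E)/|\sin k(E)|^2$ is applied to eigenvalues $E_i\in P\cap(\alpha_l,\beta_l)$ that accumulate at $\alpha_l$ and $\beta_l$ (this is how \eqref{Gapr121} and then Corollary \ref{thm1} are obtained), so the constant must remain bounded as $E\to\alpha_l,\beta_l$, where $\sin k(E)\to 0$, $\Gamma(E)\to\infty$, and --- this is the crux --- $b(E)$ may itself vanish. (For $V_0=0$ one has $b(E)=\sin\sqrt{E}/\sqrt{E}$, which vanishes precisely at the band edges $E=n^2\pi^2$; in general $b(\lambda)=0$ happens whenever a Dirichlet eigenvalue sits at an edge.) Thus your $K(E)$ is a ratio of two quantities that both degenerate at the edge, and continuity-plus-compactness inside the open band says nothing about its behaviour there.

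The paper closes exactly this gap with an algebraic identity that your proof never invokes. Starting as you do from $1/\gamma^\prime=2|\varphi|^2/\omega$ and reducing by Gr\"onwall and periodicity to $x=0$ (this part of your argument matches the paper), one computes $|\varphi(0,E)|^2+|\varphi^\prime(0,E)|^2=b^2(E)+(a(E)-\cos k(E))^2+\sin^2k(E)$, and then uses the trace and determinant relations $a+d=2\cos k$, $ad-bc=1$ to get
\begin{equation*}
 b^2+(a-\cos k)^2+\sin^2 k \;=\; b^2+a^2-2a\cos k+1 \;=\; b^2-ad+1 \;=\; b(b-c).
\end{equation*}
The factor $b(E)$ therefore cancels against $\omega=2b\sin k$, leaving $\bigl(|\varphi(0)|^2+|\varphi^\prime(0)|^2\bigr)/\omega=(b-c)/(2\sin k)$, and $b(E)-c(E)$ is bounded on arbitrary compact sets of $E$ (band edges included) simply because the transfer-matrix entries are continuous in $E$. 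That is what yields \eqref{Gapr111} with the uniformity the applications require. Incidentally, the same identity shows $|\varphi(0)|^2/b=b$ and $|\varphi^\prime(0)|^2/b=-c$, so your $K(E)$ \emph{is} in fact bounded near the edges --- but proving that is precisely the Wronskian/determinant computation missing from your proposal.
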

\begin{proof}
By \eqref{Ggamma}, it suffices to show that
\begin{equation}\label{Gapr112}
    \frac{|\varphi(x,E)| ^2}{\omega} \leq \frac{K(E)}{\sin k(E)}.
\end{equation}
Since $|\varphi(x,E)|$ is 1-periodic, it suffices to prove  \eqref{Gapr112} for $0\leq x\leq 1$.
By the fact that $V_0\in L^1[0,1]$ and Gr\"onwall's inequality, one has
\begin{equation*}
  |\varphi(x,E)|^2\leq K(E)(|\varphi(0,E)|^2+|\varphi^{\prime}(0,E)|^2).
\end{equation*}
Thus it suffices to show
\begin{equation*}
    \frac{|\varphi(0,E)|^2+|\varphi^{\prime}(0,E)|^2}{\omega} \leq \frac{K(E)}{\sin k(E)}.
\end{equation*}
By the definition,
\begin{equation*}
     \frac{|\varphi(0,E)|^2+|\varphi^{\prime}(0,E)|^2}{\omega}=\frac{b^2(E)+(a(E)-\cos k(E))^2+\sin^2 k(E)}{2b\sin k(E)}.
\end{equation*}
Using $a(E)+d(E)=2\cos k(E)$ and $a(E)d(E)-b(E)c(E)=1$, one has
\begin{eqnarray*}
  b^2(E)+(a(E)-\cos k(E))^2+\sin^2 k(E) &=& b^2(E)+a^2(E)-2a(E)\cos k(E)+1 \\
  &=& b^2(E)-a(E)d(E)+1\\
  &=& b^2(E)-b(E)c(E).
\end{eqnarray*}
Since $||T_0||\leq K(E)$, one has $|b(E)|\leq K, |c(E)|\leq K$. It implies
\begin{equation*}
    \frac{|\varphi(0,E)|^2+|\varphi^{\prime}(0,E)|^2}{\omega}=\frac{b(E)-c(E)}{2\sin k(E)}\leq \frac{K(E)}{\sin k(E)}.
\end{equation*}
We finish the proof.

\end{proof}
 \begin{lemma}\cite[Lemma 4.4]{KLS}\label{Leapr7}
  Let $\{e_i\}_{i=1}^N$ be a set of unit vectors  in a Hilbert space $\mathcal{H}$  so that
  \begin{equation*}
    \alpha=N\sup_{i\neq j}| \langle e_i,e_j\rangle|<1.
  \end{equation*}
  Then
  \begin{equation}\label{Gapr71}
    \sum_{i=1}^N|\langle g,e_i\rangle|^2\leq (1+\alpha)||g||^2.
  \end{equation}
  \end{lemma}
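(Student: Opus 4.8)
The plan is to reduce the assertion to a bound on the quadratic form attached to the Gram matrix of the $e_i$, and I would give a self-contained argument using a single bootstrapping application of Cauchy--Schwarz. Write $c_i=\langle g,e_i\rangle$, so that the left-hand side of \eqref{Gapr71} is $\sum_{i=1}^N|c_i|^2$. The starting observation is that $\sum_i|c_i|^2=\langle g,\sum_i c_i e_i\rangle$, since $\langle g,\sum_i c_i e_i\rangle=\sum_i\overline{c_i}\langle g,e_i\rangle=\sum_i|c_i|^2$. Applying Cauchy--Schwarz to this inner product gives
\[
\Big(\sum_{i=1}^N|c_i|^2\Big)^2 \le \|g\|^2\,\Big\|\sum_{i=1}^N c_i e_i\Big\|^2 ,
\]
so everything reduces to controlling $\big\|\sum_i c_i e_i\big\|^2$ in terms of $\sum_i|c_i|^2$.

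Second, I would expand the norm and separate the diagonal from the off-diagonal contributions, using $\|e_i\|=1$:
\[
\Big\|\sum_{i=1}^N c_i e_i\Big\|^2=\sum_{i=1}^N|c_i|^2+\sum_{i\neq j}c_i\overline{c_j}\langle e_i,e_j\rangle .
\]
The off-diagonal sum is then estimated by the uniform bound $\sup_{i\neq j}|\langle e_i,e_j\rangle|=\alpha/N$ together with $\big(\sum_i|c_i|\big)^2\le N\sum_i|c_i|^2$:
\[
\Big|\sum_{i\neq j}c_i\overline{c_j}\langle e_i,e_j\rangle\Big|\le \frac{\alpha}{N}\Big(\sum_{i=1}^N|c_i|\Big)^2\le \alpha\sum_{i=1}^N|c_i|^2 .
\]
Combining the last two displays yields $\big\|\sum_i c_i e_i\big\|^2\le(1+\alpha)\sum_i|c_i|^2$, and feeding this back into the Cauchy--Schwarz inequality gives $\big(\sum_i|c_i|^2\big)^2\le(1+\alpha)\|g\|^2\sum_i|c_i|^2$. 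Dividing by $\sum_i|c_i|^2$ (the inequality being trivial when this vanishes) produces exactly \eqref{Gapr71}.

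The argument is short, so there is no serious obstacle; the only point requiring care is the bookkeeping of the off-diagonal terms, and it is precisely here that the normalization in the hypothesis $\alpha=N\sup_{i\neq j}|\langle e_i,e_j\rangle|$ is tailored to the estimate. The factor $1/N$ hidden in $\sup_{i\neq j}|\langle e_i,e_j\rangle|=\alpha/N$ cancels against the factor $N$ produced by Cauchy--Schwarz in $\big(\sum_i|c_i|\big)^2\le N\sum_i|c_i|^2$, leaving the clean constant $\alpha$. An equivalent route, which I would note as a remark, is operator-theoretic: $\sum_i|\langle g,e_i\rangle|^2=\langle g,T^*Tg\rangle$ for the analysis operator $Tg=(\langle g,e_i\rangle)_i$, so the claim is $\|T\|^2=\|TT^*\|\le 1+\alpha$; here $TT^*$ is the Hermitian Gram matrix $I+B$ with $B$ having zero diagonal and entries bounded by $\alpha/N$, and a Schur/Gershgorin row-sum estimate gives $\|B\|\le(N-1)\alpha/N<\alpha$. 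Either way the constant $1+\alpha$ is the natural output; notice that $\alpha<1$ is not actually needed for this upper bound and merely guarantees it stays below $2$, which is what makes the lemma effective for summing almost-orthogonal contributions in the main estimates.
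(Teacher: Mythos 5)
Your proof is correct; the paper itself gives no proof of this lemma, simply citing \cite[Lemma 4.4]{KLS}, and your argument (write $\sum_i|c_i|^2=\langle g,\sum_i c_ie_i\rangle$, apply Cauchy--Schwarz, then split $\|\sum_i c_ie_i\|^2$ into diagonal and off-diagonal parts and use $\bigl(\sum_i|c_i|\bigr)^2\le N\sum_i|c_i|^2$ to absorb the factor $N$) is essentially the standard almost-orthogonality proof given in that reference. Your side remarks are also accurate: the hypothesis $\alpha<1$ is not needed for this upper bound, and the Gram-matrix/Gershgorin route gives the same constant.
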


\section{Proof of the results in continuous settings }\label{Seccon}
Although the following Lemma is proved for  a special class of  functions in \cite{ldcombine}, it works for all $L^2[0,1]$ functions.
 \begin{lemma}\label{Lcon1}
Suppose $V(x)=\frac{O(1)}{1+x}$ and $f\in L^2([0,1])$ is a 1-periodic function.
 Then
\begin{equation}\label{Gcons4}
  \left|  \int _{0}^x f(t)\frac{\cos 4 \theta(t,E)}{1+t}dt\right |= O(1),
\end{equation}
for  $E \in\cup_l(\alpha_l,\beta_l)$ with $k(E)\neq \frac{\pi}{2}$.
 Suppose  $E_1,E_2\in \cup_l(\alpha_l,\beta_l)$ satisfy  $k(E_1)\neq k(E_2)$ and $k(E_1)+k(E_2)\neq \pi$. Then we have
\begin{equation}\label{Gcons5}
    \left| \int_{0}^x f(t)\frac{\sin 2 \theta(t,E_1) \sin 2 \theta(t,E_2)}{1+t}dt\right |=O(1).
\end{equation}

\end{lemma}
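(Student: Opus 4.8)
The plan is to prove \eqref{Gcons4} by extracting cancellation from the oscillation of $\cos 4\theta$ through a Dirichlet/Abel summation over unit intervals, the decisive point being that the Pr\"ufer phase advances by $k(E)$ per period and that $k(E)\neq\frac{\pi}{2}$ is equivalent to the non-resonance $e^{4ik(E)}\neq 1$. First I would record two consequences of the setup. Since $\varphi(x,E)=p(x,E)e^{ik(E)x}$ with $p$ being $1$-periodic, the modulus $|\varphi|$ and hence $\gamma'=\frac{\omega}{2|\varphi|^2}$ (by \eqref{Ggamma}) are $1$-periodic, and the rotation number over one period gives $\int_0^1\gamma'(t,E)\,dt=k(E)$. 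Setting $\Theta(s):=\int_0^s\gamma'(u,E)\,du$ (so $\Theta(1)=k(E)$), the $1$-periodicity yields $\int_n^{n+s}\gamma'(t,E)\,dt=\Theta(s)$. Moreover, by the Pr\"ufer equation \eqref{Gformulatheta}, the bound \eqref{Gapr111}, and $V(x)=\frac{O(1)}{1+x}$, the phase of $\theta$ differs from that of $\gamma$ only by a small amount on each unit interval:
\begin{equation*}
\Big|\int_n^{n+s}\theta'(t,E)\,dt-\Theta(s)\Big|\le\int_n^{n+1}\frac{|V(t)|}{2\gamma'(t,E)}\,dt=O(1/n),\qquad s\in[0,1].
\end{equation*}

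Next I would localize. Writing $I(N)=\sum_{n=0}^{N-1}\int_n^{n+1}f(t)\frac{\cos4\theta(t,E)}{1+t}\,dt$ and replacing $\frac{1}{1+t}$ by $\frac{1}{1+n}$ (an error bounded by $\frac{1}{(1+n)^2}\|f\|_{L^1[0,1]}=O(1/n^2)$), the crux is a single-interval identity. Substituting $t=n+s$ and using the phase control above in the form $\theta(n+s,E)=\theta(n,E)+\Theta(s)+\rho_n(s)$ with $|\rho_n(s)|=O(1/n)$, together with $|\cos(A+\delta)-\cos A|\le|\delta|$ and $f\in L^1[0,1]$, gives
\begin{equation*}
\int_n^{n+1}f(t)\cos4\theta(t,E)\,dt=\mathrm{Re}\Big[e^{4i\theta(n,E)}c\Big]+O(1/n),\qquad c:=\int_0^1 f(s)e^{4i\Theta(s)}\,ds.
\end{equation*}
The constant $c$ is finite for \emph{every} $f\in L^2[0,1]$ by Cauchy--Schwarz, and crucially no derivative of $f$ is ever taken; this is exactly the point that upgrades the special case treated in \cite{ldcombine} to all $L^2$ functions.

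It then remains to sum. Collecting the pieces, $I(N)=\mathrm{Re}\big[c\sum_{n}\frac{e^{4i\theta(n,E)}}{1+n}\big]+O(1)$, the error being $\sum_n O(1/n^2)$. Writing $\theta(n,E)=\theta(0,E)+k(E)n+\phi_n$, the increment $\phi_{n+1}-\phi_n=-\int_n^{n+1}\frac{V}{2\gamma'}\sin^2\theta\,dt=O(1/n)$, so the amplitude $A_n:=\frac{e^{i\phi_n}}{1+n}$ satisfies $|A_{n+1}-A_n|=O(1/n^2)$; hence $A_n$ has bounded variation and $A_n\to0$. Since $e^{4ik(E)}\neq1$ (because $k(E)\neq\frac{\pi}{2}$), the partial sums of $e^{4ik(E)n}$ are bounded, and Dirichlet's test yields that $\sum_n e^{4ik(E)n}A_n$ is $O(1)$; a leftover partial interval contributes only $O(1/N)$. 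This proves \eqref{Gcons4}. For \eqref{Gcons5} I would expand $\sin2\theta_1\sin2\theta_2=\tfrac12[\cos(2\theta_1-2\theta_2)-\cos(2\theta_1+2\theta_2)]$ and run the identical argument on each term: the two phases now advance per period by $2(k(E_1)-k(E_2))$ and $2(k(E_1)+k(E_2))$, and the hypotheses $k(E_1)\neq k(E_2)$ and $k(E_1)+k(E_2)\neq\pi$ are precisely $e^{2i(k(E_1)-k(E_2))}\neq1$ and $e^{2i(k(E_1)+k(E_2))}\neq1$, which is what the Dirichlet step needs.

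The main obstacle is the logarithmic drift of the phase. Because $V(x)=\frac{O(1)}{1+x}$ sits at the borderline decay, the correction $\phi_n$ is only $O(\log n)$, so the bare exponential sums $\sum_{n\le N}e^{4i\theta(n,E)}$ grow like $O(\log N)$ rather than staying bounded, and a naive Dirichlet test on $\sum_n e^{4i\theta(n,E)}$ fails. The argument nonetheless closes because the extra weight $\frac{1}{1+n}$ improves the variation of the amplitude $A_n$ to the \emph{summable} size $O(1/n^2)$; establishing this bounded-variation estimate, and checking that the resonance condition depends only on $e^{4ik(E)}$ and so survives the drift, is the delicate part of the proof.
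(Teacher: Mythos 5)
Your proposal is correct, and it is worth noting that the paper itself contains no proof of Lemma \ref{Lcon1}: it simply cites \cite{ldcombine}, where the statement is proved for a special class of $f$, together with the unproved remark that the argument extends to all of $L^2[0,1]$. Your argument supplies exactly that missing content, and its structure --- freeze the weight $\frac{1}{1+t}$ on unit intervals, use \eqref{Gformulatheta} together with \eqref{Gapr111} and $V=\frac{O(1)}{1+x}$ to write $\theta(n+s,E)=\theta(n,E)+\Theta(s)+O(1/n)$, collapse each period to the single constant $c=\int_0^1 f(s)e^{4i\Theta(s)}\,ds$, and then close with Abel/Dirichlet summation, using that the amplitude $e^{4i\phi_n}/(1+n)$ has variation $O(1/n^2)$ even though the drift $\phi_n$ itself is only $O(\log n)$ --- isolates precisely why no regularity of $f$ beyond $\|f\|_{L^1[0,1]}<\infty$ is ever used, which is the point the paper merely asserts. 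One imprecision you should repair: the per-period advance $\Theta(1)=\int_0^1\gamma'(t,E)\,dt$ is not $k(E)$ itself; it is the rotation number of the band, which equals $\pm k(E)$ only modulo $2\pi$ (in the $l$-th band it lies in $(l\pi,(l+1)\pi)$, and the normalization $\omega>0$ can force the sign flip). This is harmless for your argument, because only $e^{4i\Theta(1)}$ and $e^{2i(\Theta_1(1)\pm\Theta_2(1))}$ enter the Dirichlet test: the stray multiples of $\pi$ are annihilated by the factors $4$ and $2$, and the four possible sign combinations $e^{2i(\pm k_1\pm k_2)}$ are all ruled out exactly by the two hypotheses $k_1\neq k_2$ and $k_1+k_2\neq\pi$ (this is, incidentally, why both hypotheses are needed even for the ``difference'' phase, since the reflection can turn a difference into a sum). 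So replace the literal identity $\Theta(1)=k(E)$ by $e^{i\Theta(1)}=e^{\pm ik(E)}$, and your proof is complete and correct.
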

 \begin{proof}[\bf Proof of Theorem \ref{Thmunic} in the continuous setting]
By the assumption of Theorem \ref{Thmunic}, for any $M>A$, we have
\begin{equation*}
 | V(x)|\leq \frac{M}{1+x}
\end{equation*}
for large $x$.
By shifting the equation, we can assume
\begin{equation}\label{Gapr76}
  | V(x)|\leq \frac{M}{1+x}
\end{equation}
for all $x>0$.

Let us take any standard spectral band $(\alpha,\beta)=(\alpha_l,\beta_l)$ into consideration.
Let $\delta\in (\alpha,\beta)$ be  such that the quasimomentum $k(\delta)=\frac{\pi}{2}$.

  Suppose  $E_1,E_2,\cdots E_N\in(\alpha,\delta)\cap P$.
  By \eqref{Gformulanorm} (also use $u\in L^2$ implies $u^{\prime}\in L^2$), we  have
  \begin{equation*}
    \sum_{i=1}^NR(x,E_i)\in L^2(\R^+),
  \end{equation*}
  and  then there exists $B_j\to \infty$ such that
  \begin{equation}\label{Gapr72}
    R (B_j,E_i) \leq  B_j^{-\frac{1}{2}},
  \end{equation}
   for all $i=1,2,\cdots,N$.

   By \eqref{GformulaR}, one has
\begin{equation}\label{GPrufRmar14}
   \ln R(x,E) - \ln R(0,E)=\int_0^x\frac{V(t)}{2\gamma^\prime(t,E)} \sin2 \theta(t,E) dt.
\end{equation}
  By \eqref{Gapr72} and \eqref{GPrufRmar14}, we have
  \begin{equation}\label{Gapr75}
\int_0^{B_j}\frac{V(t)}{\gamma^\prime(t,E_i)} \sin2 \theta(t,E_i) dt \leq -B_j+O(1),
  \end{equation}
   for all $i=1,2,\cdots,N$.

Now let us consider the Hilbert spaces
\begin{equation*}
  \mathcal{H}_j=L^2((0,B_j),(1+x)dx).
\end{equation*}

In  $\mathcal{H}_j$, by \eqref{Gapr76} we have
\begin{equation}\label{Gapr77}
  ||V||_{ \mathcal{H}_j}^2\leq M^2\log (1+B_j).
\end{equation}
Let
\begin{equation*}
  e^j_{i}(x)=\frac{1}{\sqrt{A_i^j}}\frac{ \sin 2\theta(x,E_i)}{\gamma^{\prime}(x,E_i)(1+x)}\chi_{[0,B_j]}(x),
\end{equation*}
where $A_i^j$ is chosen so that $e_i^j$ is a unit vector in $\mathcal{H}_j$.

Obviously,
\begin{eqnarray}
  A_i^j &=& \int_{0}^{B_j}\frac{\sin^2 2\theta(x,k_i)}{|\gamma^{\prime}(x,E_i)|^2(1+x)}dx \nonumber\\
   &=&\int_{0}^{B_j}\frac{1}{2|\gamma^{\prime}(x,E_i)|^2(1+x)}dx - \int_{0}^{B_j}\frac{\cos 4\theta(x,k_i)}{|\gamma^{\prime}(x,E_i)|^2(1+x)}dx \label{Gapr791}.
\end{eqnarray}

By \eqref{Gcons4}, one has
\begin{equation}\label{Gcons4Gapr79}
  \left| \int_{0}^{B_j}\frac{\cos 4\theta(x,k_i)}{|\gamma^{\prime}(x,E_i)|^2(1+x)}dx\right|= O(1),
\end{equation}
for all $i=1,2,\cdots,N$.

Direct computation shows that
\begin{eqnarray}
  \int_{0}^{B_j}\frac{1}{|\gamma^{\prime}(x,E_i)|^2(1+x)}dx &=& O(1)+\sum_{n=0}^{B_j-1} \int_{n}^{n+1}\frac{1}{|\gamma^{\prime}(x,E_i)|^2(1+n)}dx\nonumber\\
  &=& O(1)+\Gamma(E_i)\log B_j .\label{Gamma1}
\end{eqnarray}
By \eqref{Gapr791}, \eqref{Gcons4Gapr79} and \eqref{Gamma1}, we have
\begin{equation}\label{Gapr79}
      A_i^j= \frac{1}{2}\Gamma(E_i)\log B_j +O(1).
\end{equation}
Since $E_i\in(\alpha,\delta)$  ($k(E_i)\in (0,\frac{1}{2})$) for all $i=1,2,\cdots,N$, one has
  \begin{equation*}
    k_i+k_{g}\neq \pi,
  \end{equation*}
  for all $1\leq i,g\leq N$.

By  \eqref{Gcons5}, we have
\begin{equation}\label{Gapr78}
 \frac{O(1)}{\log B_j}  \leq \langle e_i^j,e_{ g}^j \rangle\leq\frac{O(1)}{\log B_j},
\end{equation}
for  all $1\leq i,g\leq N$ and $i\neq g$.

By \eqref{Gapr79} and \eqref{Gapr75}
\begin{equation}\label{Gapr710}
 \langle V,e^j_i \rangle_{\mathcal{H}_j}\leq -\frac{\sqrt{2}}{\sqrt{\Gamma(E_i)}} \sqrt{\log B_j}+O(1),
\end{equation}
for large $j$.
By \eqref{Gapr71} and \eqref{Gapr78}, one has
\begin{equation}\label{Gapr711}
\sum_{i=1}^N  |\langle V,e^j_i\rangle_{\mathcal{H}_j}|^2\leq (1+\frac{O(1)}{\log B_j}||V||_{\mathcal{H}_j}.
\end{equation}
By \eqref{Gapr710}, \eqref{Gapr711} and \eqref{Gapr77},
we have
\begin{equation*}
  \sum_{i=1}^N \frac{2}{\Gamma(E_i)}  \log B_j\leq   M^2 \log  B_j+O(1).
\end{equation*}
Let $j\to \infty$, we get
\begin{equation*}
  \sum_{i=1}^N \frac{1}{\Gamma(E_i)}  \leq  \frac{M^2}{2},
\end{equation*}
for any $M>A$. This implies
\begin{equation}\label{Gapr122}
   \sum_{i=1}^N \frac{1}{\Gamma(E_i)}  \leq  \frac{ A^2}{2} .
\end{equation}
By \eqref{Gapr111}, one has
\begin{equation*}
  \Gamma(E)\leq \frac{K(E)}{|\sin k(E)|^2}.
\end{equation*}
It implies
\begin{equation}\label{Gapr121}
   \sum_{i=1}^N |\sin k(E_i)|^2  \leq  K A^2 .
\end{equation}
$(\alpha,\delta)\cap P$ is a  countable set   with  possible accumulation   point  $\alpha$. Similarly,
 $(\delta,\beta)\cap P$ is a  countable set   with  possible accumulation  point  $\beta$, and the bounds as
in \eqref{Gapr122} and \eqref{Gapr121} hold. We finish the proof.
\end{proof}
We need another lemma to treat the special situation $E=\delta$ separately.
\begin{lemma}\label{Leapr12c}\cite{ldcombine}
Fix any spectral band $[\alpha_l,\beta_l]$. Let $\delta_l\in[\alpha_l,\beta_l]$ be such that $k(\delta_l)=\frac{\pi}{2}$. Then
there exists some $K>0$ such that  if $\limsup_{x\to \infty}x|V(x)|\leq \frac{1}{K}$, then $\delta_l\notin P$.
\end{lemma}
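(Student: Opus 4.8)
The plan is to argue by contradiction using the generalized Pr\"ufer amplitude directly, with no oscillatory cancellation at all. Suppose $\delta_l\in P$, so that $(H_0+V)u=\delta_l u$ has a nontrivial $L^2(\R^+)$ solution; replacing $u$ by its real or imaginary part I may take $u$ real, so that Proposition \ref{thmformula} applies and produces $R(x,\delta_l)$ and $\theta(x,\delta_l)$. Since $u\in L^2$ forces $u'\in L^2$ for a solution of the equation, the norm equivalence \eqref{Gformulanorm} gives $R(\cdot,\delta_l)\in L^2(\R^+)$, and $R(0,\delta_l)>0$ because $u$ is nontrivial. The goal is then to show that when $V$ is small, $R(\cdot,\delta_l)$ decays too slowly to be square integrable.

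The one feature that makes $\delta_l$ favourable here --- rather than the obstruction it is in Theorem \ref{Thmunic} --- is that $k(\delta_l)=\tfrac{\pi}{2}$ gives $\sin k(\delta_l)=1$. Hence \eqref{Gapr111} reads $\tfrac{1}{2\gamma'(x,\delta_l)}\le \tfrac{K(\delta_l)}{2}$, a bound uniform in $x$. First I would normalize the potential: since $\limsup_{x\to\infty}(1+x)|V(x)|=\limsup_{x\to\infty}x|V(x)|=A$, for any $M>A$ we have $|V(x)|\le \tfrac{M}{1+x}$ for large $x$, and after an integer shift (which leaves the $1$-periodic $H_0$, and thus $k$, $\gamma$ and the constant $K(\delta_l)$, unchanged, while only replacing $u$ by an $L^2$ solution of the shifted problem) I may assume $|V(x)|\le \tfrac{M}{1+x}$ for all $x\ge0$.

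Now I would feed these two bounds into the amplitude equation \eqref{GformulaR}, integrated as in \eqref{GPrufRmar14}. Using only $|\sin 2\theta|\le1$,
\[
\bigl|\ln R(x,\delta_l)-\ln R(0,\delta_l)\bigr|\le \int_0^x\frac{|V(t)|}{2\gamma'(t,\delta_l)}\,dt\le \frac{MK(\delta_l)}{2}\log(1+x),
\]
whence $R(x,\delta_l)\ge R(0,\delta_l)(1+x)^{-MK(\delta_l)/2}$. Choosing the constant $K:=2K(\delta_l)$, the hypothesis $A\le \tfrac1K$ yields $AK(\delta_l)\le \tfrac12<1$, so I can fix $M>A$ with $MK(\delta_l)\le1$. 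Then
\[
\int_0^\infty R(x,\delta_l)^2\,dx\ge R(0,\delta_l)^2\int_0^\infty (1+x)^{-MK(\delta_l)}\,dx=\infty,
\]
because the exponent is at most $1$. This contradicts $R(\cdot,\delta_l)\in L^2$, and therefore $\delta_l\notin P$.

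There is no substantial analytic obstacle here; the mechanism is simply that an $O(1/x)$ potential with a sufficiently small constant cannot make the Pr\"ufer amplitude decay faster than $(1+x)^{-MK(\delta_l)/2}$, which is not square integrable once $MK(\delta_l)\le1$. The resonance $4k(\delta_l)=2\pi$ that destroys the estimates \eqref{Gcons4}--\eqref{Gcons5} underlying Theorem \ref{Thmunic} plays no role, since I discard all oscillation through $|\sin 2\theta|\le1$. The only points requiring care are the bookkeeping ones: that $u\in L^2$ implies $u'\in L^2$ (already used in the proof of Theorem \ref{Thmunic}), and that the integer shift preserves the periodic background and hence the value of $K(\delta_l)$.
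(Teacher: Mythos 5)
Your proof is correct, and it necessarily takes a different route from the paper, because the paper gives no proof of Lemma \ref{Leapr12c} at all: the statement is quoted from \cite{ldcombine}, where it arises within a sharp spectral-transition analysis at the resonant energy $k=\frac{\pi}{2}$. Your argument is an elementary, self-contained alternative built entirely from tools already present in the paper: Proposition \ref{thmformula}, the bound \eqref{Gapr111} (which at $\delta_l$ is uniform in $x$ precisely because $\sin k(\delta_l)=1$), and the same shift normalization $|V(x)|\le \frac{M}{1+x}$ used in the proof of Theorem \ref{Thmunic}. The reason your crude treatment works --- discarding all oscillation via $|\sin 2\theta|\le 1$ and getting $R(x)\ge R(0)(1+x)^{-MK(\delta_l)/2}$, hence $R\notin L^2$ once $MK(\delta_l)\le 1$ --- is that the lemma only asserts the existence of \emph{some} $K>0$; the resonance $4k(\delta_l)\equiv 0 \pmod{2\pi}$, which is what blocks the oscillatory estimates \eqref{Gcons4}--\eqref{Gcons5} at $\delta_l$ and forces the paper to treat this energy separately, becomes irrelevant once sharp constants are abandoned. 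In fact your argument proves more: for every $E$ in the open band, $E\notin P$ whenever $\limsup_{x\to\infty} x|V(x)|<\sin k(E)/K(E)$, the constant degenerating only at the band edges, which is exactly where Theorem \ref{Thmunic} takes over. What you lose relative to \cite{ldcombine} is any sharpness of the threshold, but nothing in this paper requires it. Two small points, neither of which is a gap: (i) you rely on the implication ``$u\in L^2$ implies $u'\in L^2$ for solutions,'' which the paper itself invokes without proof in the argument for Theorem \ref{Thmunic}, so this is a fair assumption; (ii) your check that an integer shift leaves $\gamma'$, and hence $K(\delta_l)$, unchanged is genuinely needed for the Pr\"ufer equations to carry over, and it is correct since $\varphi(x+n_0,E)=e^{ik(E)n_0}\varphi(x,E)$ for integer $n_0$ by periodicity of $p(\cdot,E)$.
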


\begin{proof}[\bf Proof of Corollary \ref{thm1} in the continuous setting]

By \eqref{Gapr121} and also taking   $(\delta,\beta)\cap P$ into consideration, we have
\begin{equation}\label{Gapr123new}
   \sum_{E_i\in P\cap (\alpha,\beta), E_i\neq \delta} |\sin k(E_i)|^2  \leq  K A^2 .
\end{equation}
Combining with Lemma \ref{Leapr12c}, we have,
\begin{equation}\label{Gapr123}
   \sum_{E_i\in P\cap (\alpha,\beta)} |\sin k(E_i)|^2  \leq  K A^2 .
\end{equation}
By the fact that $2\cos \pi k(E_i)=D(E_i)$,  one has
\begin{equation}\label{Gapr102}
  \sum_{E_i\in P\cap (\alpha,\beta)}  (4-D(E_i)^2)   \leq  KA^2 .
\end{equation}
Without loss of generality, assume   $D(\alpha)=2$ and $D(\beta)=-2$. If $\alpha$ is non-collapsed,
\begin{equation}\label{Gapr124}
    2-D(E_i)\geq \frac{1}{K}|E_i-\alpha|.
\end{equation}
If $\alpha$ is collapsed, by Theorem \ref{ThmNonc}, one has
\begin{equation}\label{Gapr125}
    2-D(E_i)\geq \frac{1}{K}|E_i-\alpha|^2.
\end{equation}
Similarly,
 if $\beta$ is non-collapsed,
\begin{equation}\label{Gapr124new}
    2+D(E_i)\geq \frac{1}{K}|E_i-\beta|.
\end{equation}
If $\beta$ is collapsed, by Theorem \ref{ThmNonc}, one has
\begin{equation}\label{Gapr125new}
    2+D(E_i)\geq \frac{1}{K}|E_i-\beta|^2.
\end{equation}

By \eqref{Gapr123}, \eqref{Gapr124}, \eqref{Gapr125}, \eqref{Gapr124new} and \eqref{Gapr125new}, we have
\begin{equation}\label{Gapr103}
  \sum_{E_i\in P\cap (\alpha,\beta)}\min\{ |E_i-\alpha|^{\kappa_{\alpha}} ,|E_i-\beta|^{\kappa_{\beta}} \}\leq  KA^2 .
\end{equation}

\end{proof}
\begin{proof}[\bf Proof of Corollary \ref{thm2} in the continuous setting]
Since $ \tilde{\alpha}_l$ is non-collapsed, there exists some standard spectral band $[\alpha,\beta]\subset [\tilde{\alpha}_l,\beta_l]$ such that $\alpha=\tilde{\alpha}$ and
$D^{\prime}(\alpha)\neq 0$.
Let $\epsilon=\frac{\beta-\alpha}{2}$. Obviously, for  any   $E\in(\alpha,\alpha+\epsilon)\cap P$,
\begin{equation*}
  |E-\beta|\geq \frac{1}{K} .
\end{equation*}
Then
\begin{equation}\label{Gapr128}
  \min\{ |E-\beta|,|E-\alpha|\}\geq \frac{|E-\alpha|}{K}.
\end{equation}
Notice that $\kappa_{\alpha}=1$.
By \eqref{Gapr103} and \eqref{Gapr128}, we have
\begin{equation*}
    \sum_{E_i\in P\cap (\tilde{\alpha}_l,\tilde{\alpha}_l+\epsilon]} |E_i-\tilde{\alpha}_l| \leq K A^2.
 \end{equation*}
 Similarly,
 \begin{equation*}
    \sum_{E_i\in P\cap (\tilde{\beta}_l-\epsilon,\tilde{\beta}_l)} \min\{|E_i-\tilde{\beta}_l|\}\leq K A^2.
 \end{equation*}

\end{proof}
\begin{proof}[\bf Proof of Corollary \ref{cor1}]
For the free Schr\"odinger equation, we can choose Floquet solution $\varphi(x,E)=e^{i\sqrt{E}x}$ for $E>0$.
Direct computation and by \eqref{Ggamma}, one has
\begin{equation}\label{Ggammafree}
  \gamma^{\prime}(x,E)=\sqrt{E}.
\end{equation}
and then
\begin{equation}\label{Ggammafree}
  \Gamma(E)=\frac{1}{ {E}}.
\end{equation}
In this case,   let $\alpha=0$, $\beta=\infty$ and $\delta=\infty$.
Following the proof of  \eqref{Gapr122}, we have
\begin{equation*}
  \sum_{E_i\in P\cap (0,\infty)}E_i\leq \frac{A^2}{2}.
\end{equation*}
\end{proof}
\section{Proof of the  results in  discrete settings }\label{Secdis}

In this section, all the equations  are discrete.
Similarly, we can also introduce the Pr\"ufer variables $R(n,E)$ and $\theta(n,E)$.
See \cite{KRS} for details.
\begin{equation*}
  \frac{R(n+1,E)^2}{R(n,E)^2}=1-V(n+1)\frac{2}{\omega}\sin2\theta(n,E)\vert \varphi(n,E)\vert^2+\frac{4V(n+1)^2\vert \varphi(n,E)\vert^4}{\omega^2}\sin^2\theta(n,E).
\end{equation*}

Let us take one standard spectral band $[\alpha,\beta]=[\alpha_l,\beta_l]$ into consideration.
Let $\delta\in (\alpha,\beta)$ be  such that the quasimomentum $k(\delta)=\frac{\pi}{2}$.
  \begin{lemma}\cite{ldcombine}\label{Lcon1d}
Suppose $V(n)=\frac{O(1)}{1+n}$ and $f(n) $ is $q$ periodic. Let $E_1,E_2\in (\alpha,\beta)$ be such that $k(E_1)\neq k(E_2)$ and $k(E_1)+k(E_2)\neq \pi$.
Assume $k(E_1)\neq \frac{\pi}{2}$.
Then  for any $\varepsilon>0$, there exist $D(E_1,E_2,\varepsilon)$ and $D(E_1,\varepsilon)$  such that
\begin{equation}\label{Gcons4d}
  \left|  \sum _{t=1}^n f(t)\frac{\cos 4 \theta(t,E_1)}{1+t}\right|\leq D(E_1,\varepsilon)+ {\varepsilon}\ln n,
\end{equation}
and
\begin{equation}\label{Gcons5d}
    \left|\sum_{t=1}^nf(t)\frac{\sin 2 \theta(t,E_1) \sin 2 \theta(t,E_2)}{1+t}\right|\leq D(E_1,E_2,\varepsilon)+{\varepsilon}\ln  n.
\end{equation}

\end{lemma}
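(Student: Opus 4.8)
The plan is to reduce both estimates to a single weighted oscillatory-sum bound and then extract cancellation window-by-window. First I would dispose of \eqref{Gcons5d} by the product-to-sum identity $\sin 2\theta_1\sin 2\theta_2=\tfrac12\cos(2\theta_1-2\theta_2)-\tfrac12\cos(2\theta_1+2\theta_2)$, so that it suffices to bound sums of the form $\sum_{t=1}^n f(t)\tfrac{e^{i\psi(t)}}{1+t}$ (taking real parts at the end), where $\psi$ is one of the three phases $4\theta(t,E_1)$, $2\theta(t,E_1)-2\theta(t,E_2)$, $2\theta(t,E_1)+2\theta(t,E_2)$. The three hypotheses enter exactly here: using the Floquet form \eqref{Gflo}, the unperturbed phase advances by $k(E)$ over one period of $q$ sites, so the per-period advances of the three phases are $4k(E_1)$, $2(k(E_1)-k(E_2))$ and $2(k(E_1)+k(E_2))$, and since all $k\in(0,\pi)$ these are $\not\equiv 0 \pmod{2\pi}$ precisely when $k(E_1)\ne\tfrac{\pi}{2}$, $k(E_1)\ne k(E_2)$ and $k(E_1)+k(E_2)\ne\pi$.

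Second I would record the phase dynamics from the discrete Pr\"ufer system (the $\theta$-recursion accompanying the displayed $R$-recursion). It has the form $\theta(t+1,E)-\theta(t,E)=\big(\gamma(t+1,E)-\gamma(t,E)\big)+\xi(t,E)$, where $\gamma$ is the unperturbed Floquet phase and $|\xi(t,E)|\le \tfrac{C}{1+t}$ since the $V$-contributions are $O(V(t+1))=O(\tfrac{1}{1+t})$. Hence $\psi(t)=\psi_0(t)+\mathcal E(t)$, where $\psi_0$ is the pure unperturbed phase (a linear-plus-$q$-periodic function, so that $\psi_0(N+j)-\psi_0(N)$ depends only on $j$ once $N$ is fixed modulo $q$) and the drift satisfies $\mathcal E(t)=\sum_{s<t}O(\tfrac{1}{1+s})=O(\log t)$.

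Third comes the windowing. Fix $\varepsilon>0$ and a block length $L=Pq$. Split $[1,n]$ into an initial segment $[1,N_0]$, which contributes a constant $D(E_1,\varepsilon)$, and windows $W=[N,N+L)$ with $N\ge N_0$ aligned to multiples of $q$. On each window I replace $\tfrac{1}{1+t}$ by $\tfrac{1}{1+N}$ and $\psi(t)$ by $\psi(N)+(\psi_0(t)-\psi_0(N))$. By the periodicity noted above and the $q$-periodicity of $f$, the unperturbed window sum factorizes as a geometric sum over the $P$ periods, equal to $\Xi\,\tfrac{e^{iP\nu}-1}{e^{i\nu}-1}$ for a fixed constant $\Xi$ and net frequency $\nu\not\equiv 0$, hence bounded by $\tfrac{2|\Xi|}{|e^{i\nu}-1|}$ \emph{independently of} $L$. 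The two replacement errors (from the weight and from the drift $\mathcal E$ over the window) are $O(L^2/(1+N)^2)$, so the whole window contributes at most $\tfrac{1}{1+N}\big(C_1+C_2 L^2/(1+N)\big)$. Choosing first $L$ so large that $C_1/L<\varepsilon/2$, and then $N_0$ so large that $C_2L/(1+N)<\varepsilon/2$ for all $N\ge N_0$, each window is bounded by $\varepsilon\sum_{t\in W}\tfrac{1}{1+t}$; summing the windows gives $\varepsilon\ln n$, which together with $D$ yields \eqref{Gcons4d} and \eqref{Gcons5d}.

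The main obstacle is the slowly accumulating, irregular phase drift $\mathcal E(t)=O(\log t)$. Because $V$ is only $O(\tfrac{1}{1+n})$ with no regularity, one cannot show that the block-to-block phase increments have summable variation, so the clean telescoping \emph{across} windows that would give $O(1)$ is unavailable. What survives is that the cancellation \emph{inside} a single window is $O(1)$ uniformly in the window length $L$; enlarging $L$ therefore drives the per-unit-$\log$ cost below any prescribed $\varepsilon$, which is exactly why the discrete bound takes the form $D+\varepsilon\ln n$ rather than the sharper $O(1)$ of the continuous Lemma~\ref{Lcon1}, where $\gamma'$ stays bounded below by a positive constant and an honest integration by parts closes the estimate.
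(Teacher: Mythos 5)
You cannot be checked against an in-paper argument here, because the paper does not prove Lemma \ref{Lcon1d} at all: the statement is imported wholesale from the companion preprint \cite{ldcombine}, whose proof is not reproduced in this paper (the same is true of the continuous analogue, Lemma \ref{Lcon1}). Judged on its own merits, your proposal is sound and gives exactly the claimed bound. The reduction of \eqref{Gcons5d} by the product-to-sum identity, the matching of the three per-$q$-period phase advances $4k(E_1)$, $2(k(E_1)-k(E_2))$, $2(k(E_1)+k(E_2))$ with the three non-resonance hypotheses, the decomposition $\theta(t,E)=\gamma_0(t,E)+\mathcal{E}(t)$ with per-step drift $O(1/(1+t))$, and the windowing (freeze the weight $\frac{1}{1+t}$ and the drift on a block of length $L=Pq$, use the exact $q$-periodicity of $f$ and of the linear-plus-periodic unperturbed phase to reduce the block to a geometric sum bounded by $\frac{2|\Xi|}{|e^{i\nu}-1|}$ uniformly in $L$, then choose first $L$ and then $N_0$) correctly yield $D+\varepsilon\ln n$; the parameter bookkeeping $C_1/L<\varepsilon/2$ followed by $C_2L/(1+N_0)<\varepsilon/2$ closes the estimate, with the initial segment $[1,N_0]$ absorbed into $D(E_1,E_2,\varepsilon)$. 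Two points need explicit treatment in a full write-up. First, the additive phase recursion $\theta(t+1,E)-\theta(t,E)=\gamma_0(t+1,E)-\gamma_0(t,E)+O\bigl(|V(t+1)|\,|\varphi(t,E)|^2/\omega\bigr)$ is nowhere displayed in this paper (only the $R$-recursion is, with a pointer to \cite{KRS}), so you must derive it from the discrete EFGP transform and invoke the discrete analogue of \eqref{Gapr112} to see that the drift constant is finite for fixed $E$. Second, the branch choice making $\gamma_0$ exactly linear plus $q$-periodic shifts the per-period advance by $2\pi$ times an integer winding number; this is harmless precisely because all three phase combinations have even integer coefficients, so $e^{i\nu}$ is unaffected, but it is worth saying. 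Your closing explanation of why the discrete bound is $D+\varepsilon\ln n$ while the continuous Lemma \ref{Lcon1} achieves $O(1)$ is speculation that cannot be confirmed from this paper (the proofs in \cite{ldcombine} are not available here); it is inessential to your argument and should be flagged as such rather than asserted.
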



\begin{proof}[\bf Proof of Theorem \ref{Thmunic} in the discrete setting]
Replacing Lemma   \ref{Lcon1} by Lemma  \ref{Lcon1d},
the proof of  discrete case   can be proceeded in a similar way as that of continuous case. We omit the details.
\end{proof}

\begin{lemma}\label{Leapr12d}\cite{ldcombine}
There exists some $K>0$ such that  if $\limsup_{n\to \infty}n|V(n)|\leq \frac{1}{K}$, then $\delta\notin P$.
\end{lemma}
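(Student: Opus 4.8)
The plan is to exclude $\delta$ from $P$ by a direct, energy-independent lower bound on the Pr\"ufer amplitude $R(n,\delta)$, exploiting only the crude estimate $|\sin 2\theta|\le 1$ together with the smallness of the coupling. Suppose for contradiction that $\delta\in P$, and let $u$ be an $\ell^2(\N)$ solution of \eqref{eq:JacobiPerturbed} at $E=\delta$, with associated Pr\"ufer variables $R(n,\delta)$ and $\theta(n,\delta)$. Since $|\varphi(n,\delta)|^2$ is $q$-periodic it is bounded by $C_\varphi=\max_n|\varphi(n,\delta)|^2$, and $\omega=\omega(\delta)>0$ is a fixed constant. From $\limsup_{n\to\infty}n|V(n)|\le\frac1K$, for any $\varepsilon>0$ we have $|V(n)|\le\frac{1/K+\varepsilon}{n}$ for all large $n$.

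First I would take logarithms in the Pr\"ufer recursion for $R(n+1,\delta)^2/R(n,\delta)^2$ displayed just above Lemma \ref{Lcon1d} and telescope, writing $g(n)=-\frac{2V(n+1)}{\omega}\sin 2\theta(n,\delta)|\varphi(n,\delta)|^2$ and $h(n)=\frac{4V(n+1)^2|\varphi(n,\delta)|^4}{\omega^2}\sin^2\theta(n,\delta)$, so that
\begin{equation*}
\ln R(N+1,\delta)^2-\ln R(1,\delta)^2=\sum_{n=1}^{N}\ln\bigl(1+g(n)+h(n)\bigr).
\end{equation*}
For large $n$ one has $|g(n)|=O(1/n)$ and $|h(n)|=O(1/n^2)$, so $1+g(n)+h(n)>\tfrac12$ and the elementary inequality $\ln(1+x)\ge x-x^2$ gives $\ln(1+g+h)\ge g(n)-\frac{C}{n^2}$ for $n$ large. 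Hence all second- and higher-order contributions sum to a bounded quantity and
\begin{equation*}
\ln R(N+1,\delta)^2\ge \sum_{n=1}^{N}g(n)+O(1).
\end{equation*}

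The key step is the crude lower bound on the first-order sum. Using $|\sin 2\theta|\le 1$, $|\varphi(n,\delta)|^2\le C_\varphi$, and the decay of $V$,
\begin{equation*}
\sum_{n=1}^{N}g(n)\ge-\sum_{n=1}^{N}\frac{2|V(n+1)|}{\omega}|\varphi(n,\delta)|^2\ge-\frac{2C_\varphi}{\omega}\Bigl(\tfrac1K+\varepsilon\Bigr)\ln N+O(1).
\end{equation*}
Writing $C'=\frac{2C_\varphi}{\omega}$, this gives $R(N+1,\delta)^2\ge c\,N^{-C'(1/K+\varepsilon)}$ for some $c>0$. Choosing $K>C'$ and then $\varepsilon$ small enough makes $C'(1/K+\varepsilon)<1$, whence $\sum_N R(N,\delta)^2=\infty$. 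Finally, by the discrete analogue of \eqref{Gformulanorm}, $R(n,\delta)^2$ is comparable to $|u(n)|^2+|u(n+1)|^2$; summing $R(n,\delta)^2\le C(|u(n)|^2+|u(n+1)|^2)$ forces $\sum_n|u(n)|^2=\infty$, contradicting $u\in\ell^2(\N)$. Therefore $\delta\notin P$.

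I expect the main obstacle to be bookkeeping rather than conceptual: justifying the logarithmic expansion uniformly for large $n$ (controlling any transient regime where $1+g+h$ could be small), confirming that $h$ and the $O(1/n^2)$ remainder are genuinely summable so all error enters as $O(1)$, and pinning down the admissible threshold (any $K>2C_\varphi/\omega$ works) together with the comparison constant in the discrete $R$-versus-$u$ relation. Notably the argument never uses $k(\delta)=\frac{\pi}{2}$: the same estimate excludes every energy in the band for subcritical coupling, which is precisely why $\delta$ requires no finer resonance analysis here---the crude amplitude bound already defeats $\ell^2$ membership.
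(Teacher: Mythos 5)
Your proof is correct, and it is necessarily a different route from the paper's, because the paper contains no proof of this lemma at all: it is quoted from \cite{ldcombine}. The reason the paper outsources it is that its own machinery (the averaging Lemma \ref{Lcon1d} feeding the Hilbert-space argument of Theorem \ref{Thmunic}) degenerates exactly at $k(\delta)=\frac{\pi}{2}$, where the phase of $\cos 4\theta$ is resonant and the sum in \eqref{Gcons4d} can genuinely grow like $\ln n$; the cited reference handles that resonance. Your observation is that no resonance analysis is needed here, since the lemma only asks for \emph{some} positive threshold: applying the trivial bounds $|\sin 2\theta|\le 1$ and $\sin^2\theta\ge 0$ to the exact recursion displayed above Lemma \ref{Lcon1d}, telescoping logarithms, and absorbing all quadratic terms into a convergent sum gives $R(N,\delta)^2\ge c\,N^{-\alpha}$ with $\alpha=\frac{2}{\omega}\bigl(\max_n|\varphi(n,\delta)|^2\bigr)\bigl(\frac{1}{K}+\varepsilon\bigr)$, so any $K>\frac{2}{\omega}\max_n|\varphi(n,\delta)|^2$ forces $\alpha<1$, hence $\sum_N R(N,\delta)^2=\infty$, which is incompatible with an $\ell^2(\N)$ solution via the correct side of the two-sided comparison between $R(n)^2$ and $|u(n)|^2+|u(n+1)|^2$. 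The supporting steps all check out ($\ln(1+x)\ge x-x^2$ for $x\ge-\frac12$; $h\ge 0$; the finitely many initial sites contribute $O(1)$). What the crude bound costs is sharpness of the constant; what it buys is a short, self-contained argument at the one energy the paper's method cannot touch--and in the free case $V_0=0$ it is not even lossy, since $\omega=2$, $|\varphi|\equiv 1$ give threshold $1$, matching Lemma \ref{Leapr12d1} exactly. Two small points to tighten: the comparison \eqref{Gformulanorm} is stated in the paper only for the continuous case, so you should explicitly import its discrete analogue from \cite{KRS}; and your $K$ depends on the band through $\max_n|\varphi(n,\delta)|^2/\omega$, which is consistent with the lemma as stated (the band is fixed throughout Section \ref{Secdis}) but deserves a sentence, since $\omega\to 0$ at band edges means the threshold is not uniform over bands.
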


\begin{proof}[\bf Proof of Corollaries \ref{thm1} and \ref{thm2} in the discrete setting]
The proof follows from the continuous case. We omit the details here.
\end{proof}
In order to prove Corollary \ref{cor2}, we also need one Lemma.
\begin{lemma}\label{Leapr12d1}\cite[Theorem 2.1]{liu2018criteria}
 If $V_0=0$ and $\limsup_{n\to \infty}n|V(n)|< 1$, then $0$ can not be in $P$.
\end{lemma}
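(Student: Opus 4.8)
The plan is to work directly with the three-term recursion at the band center and to show that every nontrivial solution decays too slowly to lie in $\ell^2(\N)$. Since $V_0=0$ corresponds to the free Jacobi data $a_n\equiv 1$, $b_n\equiv 0$, equation \eqref{eq:JacobiPerturbed} at $E=0$ reads
\begin{equation*}
 u(n+1)+u(n-1)+V(n+1)u(n)=0 .
\end{equation*}
When $V\equiv 0$ this forces $u(n+2)=-u(n)$, so all free solutions are $4$-periodic and, in particular, the free Pr\"ufer radius $u(n)^2+u(n+1)^2$ is \emph{exactly} conserved. This conservation is precisely what makes $E=0$ (the point $k=\tfrac{\pi}{2}$) resonant, and I want to exploit it to extract a one-sided lower bound on the radius under the perturbation.

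First I would set $\rho(n)^2=u(n)^2+u(n+1)^2$ for a real solution $u$ (eigenfunctions may be taken real). Since $\rho(n+1)^2-\rho(n)^2=u(n+2)^2-u(n)^2$, substituting $u(n+2)=-u(n)-V(n+2)u(n+1)$ gives
\begin{equation*}
 \rho(n+1)^2=\rho(n)^2+2V(n+2)u(n)u(n+1)+V(n+2)^2u(n+1)^2 .
\end{equation*}
Using the elementary bounds $2|u(n)u(n+1)|\le \rho(n)^2$ and $u(n+1)^2\le\rho(n)^2$, and discarding the nonnegative quadratic term, I obtain the clean multiplicative lower bound
\begin{equation*}
 \frac{\rho(n+1)^2}{\rho(n)^2}\ge 1-|V(n+2)| .
\end{equation*}

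To finish, I would choose $M$ with $\limsup_{n\to\infty}n|V(n)|<M<1$, which is possible precisely because the hypothesis is a \emph{strict} inequality; then $|V(n+2)|\le M/(n+2)$ for all large $n$. Telescoping the previous inequality from some $n_0$ to $N$ and using $\ln(1-x)\ge -x-x^2$ for small $x$ yields $\rho(N)^2\ge c\,\rho(n_0)^2 N^{-M}$ for a constant $c>0$, provided $u\not\equiv 0$ (so that $\rho(n_0)>0$, which then propagates to all $n$ since the factor $1-|V(n+2)|$ is positive for large $n$). Since $M<1$ we have $\sum_N N^{-M}=\infty$, hence $\sum_N\rho(N)^2=\infty$ and $u\notin\ell^2(\N)$. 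As this holds for every nontrivial solution, no $\ell^2$ solution exists under any boundary condition at $0$, i.e. $0\notin P$.

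The main obstacle is conceptual rather than computational: one must notice that at the band center the free radius is exactly preserved, so the perturbation enters only through the single cross term $2V(n+2)u(n)u(n+1)$, and that the AM--GM bound $2|u(n)u(n+1)|\le\rho(n)^2$ converts this into the sharp factor $1-|V(n+2)|$. This is also the source of the threshold value $1$: the borderline rate $N^{-1}$ is still non-summable, so any $M<1$ forbids $\ell^2$ decay, whereas the argument necessarily breaks down once $\limsup_{n\to\infty} n|V(n)|\ge 1$, which is exactly the regime where genuine Wigner--von Neumann eigenvalues at $E=0$ can appear.
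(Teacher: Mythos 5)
Your proof is correct, but note that the paper does not actually prove this lemma: it is imported verbatim from \cite[Theorem 2.1]{liu2018criteria} (the citation is the proof) and used only once, to exclude the band center $E=0$ in the proof of Corollary \ref{cor2}. So your self-contained argument is a genuine addition rather than a rederivation of anything in the text. Substantively, it is the band-center Pr\"ufer argument in disguise: at $E=0$ one has $k=\pi/2$, $\varphi(n,0)=e^{i\pi n/2}$, $|\varphi|\equiv 1$ and $\omega=2\sin k=2$, so your quantity $\rho(n)^2=u(n)^2+u(n+1)^2$ coincides with the Pr\"ufer radius $R(n)^2$, and your inequality $\rho(n+1)^2/\rho(n)^2\ge 1-|V(n+2)|$ is exactly what the recursion for $R(n+1,E)^2/R(n,E)^2$ displayed at the start of Section \ref{Secdis} reduces to at this energy, after bounding $|\sin 2\theta|\le 1$ and discarding the nonnegative square term; the same telescoping then gives $R(N)^2\ge c\,N^{-M}$, non-summable for $M<1$. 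Your route buys elementarity and independence from the Pr\"ufer formalism and the companion papers, at the cost of being special to $V_0=0$, $E=0$ (which is all the lemma needs); the cited theorem comes packaged in a framework that also yields the matching sharpness construction at coupling $\ge 1$, which you allude to but of course do not prove. One step you should make explicit: the assertion that $\rho(n_0)>0$ for your (large) starting index $n_0$ rests on the standard fact that a nontrivial solution of a second-order recursion with unit off-diagonal coefficients cannot vanish at two consecutive sites --- if $u(n_0)=u(n_0+1)=0$, running the recursion in both directions forces $u\equiv 0$. With that one line added, the argument is complete.
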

\begin{proof}[\bf Proof of Corollary \ref{cor2}]

In this case, we only have one spectral band $[-2,2]$. Moreover, $\varphi(n,E)=e^{ik(E)n}$, $2\cos k(E)=E$ for $E\in(-2,2)$ and $\Gamma(E)=\frac{1}{\sin^2 k(E)}$.
By Theorem \ref{Thmunic}($\delta=0$),
\begin{equation*}
    \sum_{E_i\in P\cap(-2,2);E_i\neq 0} \frac{1}{\Gamma(E_i)}\leq A^2.
 \end{equation*}
 Thus
\begin{equation}\label{Gapr104}
  \sum_{E_i\in P\cap(-2,2);E_i\neq 0}(4-E_i^2)   \leq  4A^2 .
\end{equation}
We may add 4 in the bound of \eqref{Gapr104} if  0 is in $P$.
However, if $ A<1$, by Lemma \ref{Leapr12d}, 0 can not be in $P$.
We finish the proof.
\end{proof}
\begin{proof}[\bf Proof of Corollaries \ref{cor3} and \ref{cor4}]
It follows from Corollaries  \ref{thm1} and \ref{thm2} immediately since the discrete operator has finitely many spectral bands.
\end{proof}
 \section*{Acknowledgments}
    This research was   supported by  NSF DMS-1401204 and  NSF DMS-1700314.

\footnotesize

\end{document}